\newtheorem{theorem}{Theorem}
\newtheorem{definition}{Definition}
\newtheorem{corollary}{Corollary}
\newtheorem{remark}{Remark}
\newtheorem{lemma}[theorem]{Lemma}
\DeclarePairedDelimiter\floor{\lfloor}{\rfloor}
\begin{document}

\title{Growing Sparse Quantum Codes from a Seed}

\author{ChunJun Cao}
\email{cjcao@vt.edu}
\affiliation{Department of Physics, Virginia Tech, Blacksburg, VA, USA 24061}
\affiliation{Virginia Tech Center for Quantum Information Science and Engineering, Blacksburg, VA 24061, USA}

\author{Brad Lackey}
\email{Brad.Lackey@microsoft.com}

\affiliation{Microsoft Quantum, Redmond, WA, USA}
\maketitle

\begin{abstract}

It is generally unclear whether smaller codes can be ``concatenated'' to systematically create quantum LDPC codes or their sparse subsystem code cousins where the degree of the Tanner graph remains bounded while increasing the code distance. In this work, we use a slight generalization of concatenation called conjoining introduced by the quantum lego formalism. We show that by conjoining only quantum repetition codes, one can construct quantum LDPC codes. More generally, we provide an efficient iterative algorithm for constructing sparse subsystem codes with a distance guarantee that asymptotically saturates $kd^2=O(n)$ in the worst case. Furthermore, we show that the conjoining of even just two-qubit quantum bit-flip and phase-flip repetition codes is quite powerful as they can create any CSS code. Therefore, more creative combinations of these basic code blocks will be sufficient for generating good quantum codes, including good quantum LDPC codes.
\end{abstract}

\section{Introduction}

Concatenation is a  simple yet powerful tool for coding theory. Because these codes are grown from smaller seed codes in a straightforward manner, many properties such as distance, rate, enumerators, and fault-tolerant (FT) threshold can be rigorously controlled and computed \cite{Knill:1996ex}. A common textbook example of concatenated code refers to tree-like concatenation, where the resulting code after $L$ layers of concatenation is $[[n^L,1,d^L]]$. However, more general concatenation involves using codes at various rates at different layers, such as \cite{Yamasaki_2024,yoshida2024concatenate,gidneyconcat} and also geometric connectivity, such as the holographic codes\cite{Pastawski_2015,Harris_2018,ABSC,evenblycode,biasedholo}. These codes carry significant practical interest as they are conceptually simple yet contain a wide variety of architectures. They are also efficiently decodable and can produce competitive spacetime overheads for fault tolerance \cite{Yamasaki_2024,yoshida2024concatenate,gidneyconcat}. In particular, it is much easier to control or design the code properties such as code distance and (addressable) fault-tolerant gates\cite{yoder_piece,hetero_tree1,cao_lackey_targeted,Steinberg:2025kzb,Ferris_2014}.  It is also much easier to simulate or analyze the performance of such codes thanks to their modularity\cite{Litinski:2025irj}.

However, a common drawback for concatenated codes is their syndrome check weight, which can grow exponentially with the number of layers $L$. This is unavoidable in standard concatenation as it involves contracting a sequence of isometric tensors in the tensor network representation of quantum codes. As the outer code increases distance, it also expands the checks of inner code stabilizers. In contrast, sparse quantum codes, i.e., codes where check weights and the number of checks each qubit talks to is constant have remained the leading contenders to fault-tolerant quantum computation thanks to their lower overhead scaling and sparse spatial connectivity\cite{pattison_qldpc,hayata_qldpc,gottesmanldpc_overhead,Bravyi_2024,panteleev2022asymptoticallygoodquantumlocally,Tillich_2014,qldpc_rev}. This includes topological codes, more general quantum low-density parity-check (LDPC) codes, and their sparse subsystem code cousins. In addition to these direct constructions, methods such as \cite{hastings2016,hastings2023,Vasmer2024,sparsecodecircuit,baspin2024wirecodes} can be applied to sparsify codes that may otherwise carry large weight checks. However, while many above constructions work well in asymptopia by examining the scaling of code parameters, the constant scaling can still be wildly infeasible in practical quantum computing. Furthermore, efficient decoding, simulation, and the addressability of (non-Clifford) multi-qubit FT gates in these codes can also present separate challenges.

It is therefore natural to ask whether one can combine the advantages from both ideas above to grow finite size quantum LDPC or sparse subsystem codes with reasonably good parameters directly from smaller code modules in a way that can preserve desirable FT properties. This is interesting both coding-theoretically and practically. There are many method of code construction using smaller modules that do not directly correspond to concatenation, e.g., quantum Tanner codes\cite{leverrier2022quantumtannercodes}, ZX calculus \cite{ZXcalc,kissinger2022phasefreezxdiagramscss}, holographic codes, and quantum lego codes\cite{QL1,Farrelly_2021,Farrelly_2022,QL2}. The last two are particularly intriguing because unlike the other formalisms, suitable choice of the lego blocks also provides a straightforward way to produce codes that support FT non-Clifford gates and even targeted gates\cite{cao_lackey_targeted,QL1} through operator pushing. The quantum lego formalism also demonstrated that any quantum code, including sparse codes, can be written as the combination of a few types of atomic lego blocks\cite{QL1,QL2,xpql}. 

However, in both concatenation and the quantum lego generalization, we still lack an explicit pathway to synthesize sparse codes with target code parameters from smaller atoms. For example, the proof of expressivity in \cite{QL1,xpql} are existence proofs. For stabilizer codes \cite{kissinger2022phasefreezxdiagramscss,QL2}, the proof is slightly more constructive where it provides an explicit construction as atomic codes once the stabilizers are known. Nevertheless it does not provide a way to produce those stabilizers in the first place if we are only given simpler code parameters like rate or distance. The tensor network produced is also often far from efficient. Thus far, only isolated examples have been produced as a proof of principle \cite{QL1,Farrelly_2022,QL2} based on intuition, but no generalizing principle has been distilled. The same is true for ZX-based approaches where it is often easy to reverse engineer a code diagrammatically, but hard to build one from scratch without prior knowledge.

Here we take a first step filling in these gaps of knowledge above where we provide a generalizable principle for growing sparse codes in a fashion similar to concatenation, such that one can add to the code module by module to increase distance, but doing so without ruining sparsity. Although we do not discuss FT gates in the current work, it is a necessary step for the eventual goal for growing qLDPC codes with useful addressable FT gates. Explicitly, we use the conjoining operations introduced by quantum lego\cite{QL1,QL2} to glue together smaller code fragments. Diagrammatically, these are simple and natural extensions of code concatenation and their operation over check matrices have been defined in \cite{QL1}. One can also think of conjoining simply as concatenation but the ``code'' we concatenate with can be pathological, i.e., its encoding map can be non-isometric. 
We first motivate the general principle behind conjoining quantum lego blocks that can preserve sparsity. Then we specialize to conjoining smaller lego blocks for the sake of simplicity, in particular, 2-qubit bit-flip and phase-flip repetition codes. We show that these elements are already sufficiently universal such that any CSS code can be produced from an abundant supply of them. In a more constructive manner, we first shows how LDPC code like the surface code or compass code can be built by alternating regular concatenation and conjoining of quantum repetition codes. Then we introduce a simple algorithm that constructs sparse subsystem codes iteratively where the resulting code parameter achieves $kd^2=O(n)$ asymptotically in the worst case. In constructing the explicit examples, we also introduce a new tensor network for the 2d surface code and 2d compass code which may be of independent interest. We also show familiar examples like the Bacon-Shor code can emerge from this algorithm while also introducing new code constructions using this method as a proof of principle. 

We briefly review the basics of quantum lego in Sec.~\ref{sec:2} then discuss the conjoining of LDPC and sparse codes in Sec.~\ref{sec:qldpc} and \ref{sec:4}. In Sec~\ref{sec:5}, we explain why the use of repetition codes as atomic legos is sufficient to cover CSS codes in generality, and finally conclude with remarks on connections with existing and future work in Sec~\ref{sec:discussion}.

\section{Growing codes graphically with quantum lego}\label{sec:2}
Traditionally, the default method for growing a large code with good property from smaller code blocks is code concatenation, where each physical qubit of an inner code is encoded as the logical qubit of outer codes, see for example Figure~\ref{fig:genshor} below. There each branch of the tree represents a separate encoding map of the inner or outer code. The red and green node correspond to phase-flip and bit-flip repetition codes (or $X$- and $Z$-spiders) respectively. More generally, the choices of inner and outer codes can vary and the encoding can become complicated to track. 

Therefore, it is often far more effective to use a graphical representation called tensor network (or ZX diagram\footnote{Recently, it is becoming more common to use ZX diagrams to describe QECCs. In the context of this work, we can simply treat them as tensor networks with specialized tensors that obey particular transformation rules. }) to describe the construction of quantum codes\cite{Ferris_2014,Farrelly_2021,QL1,Farrelly_2022,QL2,xpql,de_Beaudrap_2020,Chancellor_2023,kissinger2022phasefreezxdiagramscss,Kissinger_2024,Huang_2023}. The specific approach we focus on here is called Quantum Lego (QL)\cite{QL1,QL2,xpql,tensor_enum,QLRL}, which is a more general framework that designs codes by combining smaller codes to form big codes. In particular, it was shown that QL can construct any code by joining certain sets of smaller quantum codes \cite{QL1}. Additionally, QL applies to stabilizer and non-stabilizer codes by keeping track of their symmetries that fix the fault-tolerant gates supported by the code\cite{xpql}. More detailed information of the framework can be found in\cite{QL1,xpql}. 

A ``lego block'' in QL is a tensor, which can be represented graphically as a node with dangling edges, where each edge or leg maps to a tensor index (Figure~\ref{fig:qlreview}a). Physically, the tensor $V_{i_1,i_2,\dots}$ can attain different physical meanings when we contract different bases. The interconversion between these different objects represented by the same tensor captures the Choi-Jamiolkowski isomorphism and code shortening\cite{gottesman1997stabilizer,raissi_modifying,QL1} (Figure~\ref{fig:qlreview}a) which identify the dualities between maps $V$ and states $|V\rangle$. 

For example, consider a 4-index tensor 
$$V_{ijk\ell}=\begin{cases}
    1\quad i=j=k=\ell\\
    0\quad \mathrm{otherwise}
\end{cases}$$
where $i,j,k,\ell=0$ or $1$. By designating certain legs/indices of the same tensor as input or outputs of a map, it can represent a 4-qubit GHZ state $|0000\rangle+|1111\rangle = V_{ijk\ell}|i,j,k,\ell\rangle$, an encoding isometry of a 3-qubit repetition code $|000\rangle\langle 0|+|111\rangle\langle 1|=V_{ijk\ell}|ijk\rangle\langle \ell|$, a projection $|00\rangle\langle 00|+|11\rangle\langle 11|=V_{ijk\ell}|ij\rangle\langle k\ell|$ onto the space spanned by $\{|00\rangle,|11\rangle\}$, or a linear functional $|0\rangle\langle 000|+|1\rangle\langle 111|=V_{ijk\ell}|i\rangle\langle jk\ell|$, where in each of these repeated indices are summed over. This means that the same tensor can be used for different purposes. For example, a tree-like contraction of such tensors in Figure~\ref{fig:genshor} is a composition of encoding maps of repetition codes and therefore corresponds to code concatenation. However, the same tensor can also be used such that more than one leg is designated as inputs. As tensors can be joined together without having to worry about the directionality of the map, it is far more convenient to describe the construction of code with (encoding) tensors $V_{i_1i_2,\dots}$ first, then only assign directionality to the maps that tensors can describe when needed. Often, it is helpful then to work directly with the states $|V\rangle$ they describe, then convert it into an encoding map after all gluing have been completed.

In practice, one can obtain these tensors from the encoding map of a small quantum codes or particular quantum states with symmetries. One can also custom design suitable tensors through a structured search, e.g. using variational algorithms, as the total system size is small\cite{VQAQEC}. Tensors constructed this way are desirable as they inherit a large number of symmetries from the fault-tolerant logical gates of a code or from the stabilizer group of a state. On the one hand, we can take advantage of these symmetries to simplify code design --- instead of working with exponentially many tensor components, one can track these symmetries more efficiently to characterize underlying ``quantum lego block''. For example, a stabilizer state over $n$ qubits is equipped with a stabilizer group with $n$ generators. As such, one only needs $O(n^2)$ instead of $O(2^n)$ parameters to uniquely fix the state or its tensor coefficient. 
In this case, each tensor, or rather, its dual state $|V\rangle$, can be fully characterized by its Pauli stabilizer group or equivalently its check matrix in the symplectic representation.

On the other hand, these symmetries can be converted back into fault-tolerant gates of a code as needed. For example, consider a symmetry of a tensor of the form $\mathcal{O}_1\otimes \mathcal O_2\otimes \mathcal O_3\otimes \mathcal O_4\otimes \mathcal O_5$ as in Figure~\ref{fig:qlreview}b. Note that this operator need not be Pauli. One can then assign different physical meanings to this symmetry in different aspects of the duality. As a state, it means that $\mathcal O_1\otimes \mathcal O_2\otimes \mathcal O_3\otimes \mathcal O_4\otimes \mathcal O_5|V\rangle=|V\rangle$, i.e. it stabilizes the state. But it also can be interpreted as a logical operator $\bar{\mathcal O_1^t}=\mathcal O_2\otimes \mathcal O_3\otimes \mathcal O_4\otimes \mathcal O_5$ of a code with encoding map $V$ if we have chosen leg $i_1$ to be the input logical leg. Generally, for tensors obtained from Pauli stabilizer codes, the associated stabilizer and normalizer fully and efficiently fixes the underlying tensor without ever having to deal with its components. A similar but more nuanced generalization applies to XP stabilizer codes also.

\begin{figure*}
    \centering
    \includegraphics[width=0.81\linewidth]{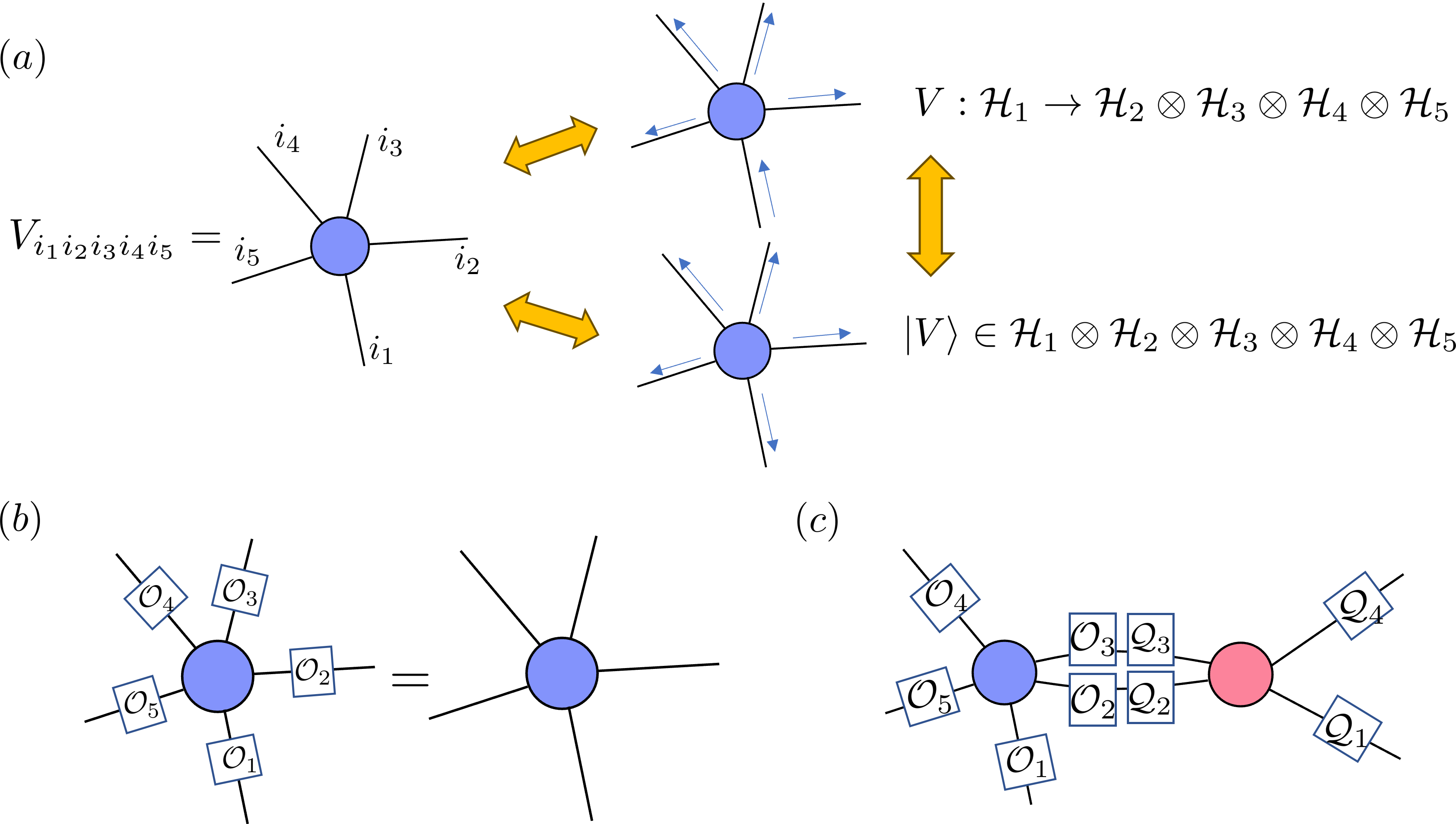}
    \caption{(a) A tensor can be represented as a state or a map, depending on how one assigns the meaning of the indices or legs. This can be understood as a graphical manifestation of the Choi-Jamiolkowski isomorphism. (b) A tensor has a symmetry if its contraction with other tensors leave the original tensor invariant. One can think of a symmetry as a stabilizer of the state that the tensor represents. (c) When two tensors are glued together, the symmetries of the larger tensor network can be obtained via operator matching. Here the operators match if $\mathcal O_3=\mathcal Q_3^*$ and $\mathcal O_2=\mathcal Q_2^*$. If the matching condition is satisfied, then the operators acting on the dangling legs remain a symmetry of the tensor network.}
    \label{fig:qlreview}
\end{figure*}

To build bigger codes, these ``quantum lego blocks'' can then be glued together. Such gluing operations are natural generalization of code concatenation from a graphical perspective. While concatenation is gluing with certain restrictions on the orientation of the tensors,  those restrictions are completely revoked in general tensor contractions. To characterize the bigger codes, we generate their symmetries from their constituents through ``operator matching.'' For example as in Figure~\ref{fig:qlreview}c, if $\mathcal O_1\otimes \mathcal O_2\otimes\dots$ and $\mathcal Q_1\otimes \mathcal Q_2\otimes \dots$ are symmetries of two legos, and we contract along legs $2$ and $3$, then these must match on the connecting edges in order for the operator $\mathcal O_1\otimes \mathcal O_4\otimes \mathcal O_5\otimes \mathcal Q_1\otimes \mathcal Q_4$ to remain a symmetry of the bigger tensor network. In this context we say two operators $\mathcal O,\mathcal Q$ match if $\mathcal O=\mathcal Q^*$. Repeating this procedure for all symmetries on each lego block, we generate the symmetries of the bigger codes. For stabilizer codes, they are also sufficient to uniquely determine the resulting codes. An equivalent, but convenient, description of operator matching is operator pushing, where one cleans any operator acting on connected edges with local symmetries of the tensor so that the non-trivial operators only act on the dangling edges. For example, taking the top leg of Figure~\ref{fig:genshor} as the logical qubit, and the remaining leg as physical qubits, logical and stabilizer operators can be visualized as operators flowing through the network\footnote{The operator needs to change into its complex conjugate when it flows through an edge. For Paulis, the complex conjugate is always equal to the operator itself up to a global phase, hence we only need to track the Pauli type.}. 
 \begin{figure}
     \centering
     \includegraphics[width=1\linewidth]{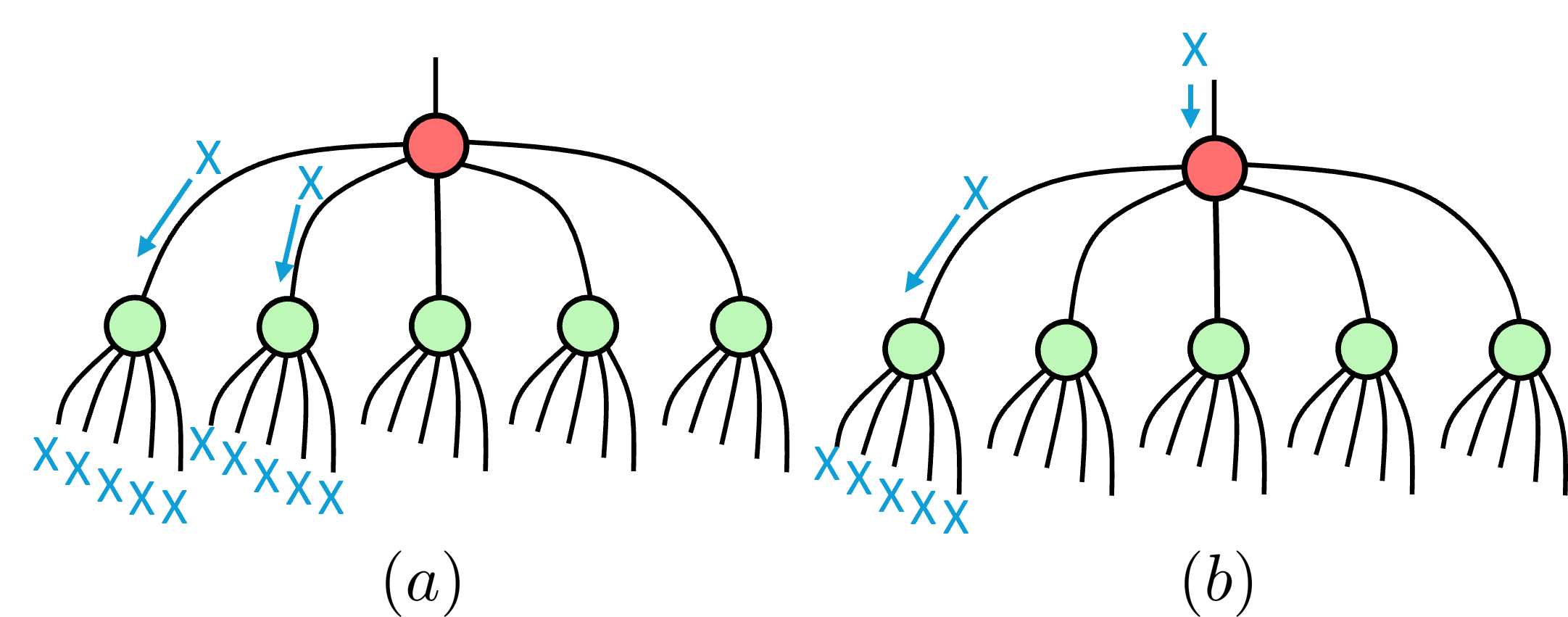}
     \caption{A $[[25,1,5]]$ generalized Shor code by gluing together $X$- and $Z$-spiders. The top leg represents the logical input while the remaining dangling legs represent the physical qubits. (a) A stabilizer of the code can be obtained from pushing operators that act as the identity on the logical leg. (b) A representation of the logical X operator can be obtained by pushing operator X through the logical leg to the physical legs, generating an operator flow. (c.f. Pauli flow\cite{ZXcalc}.)}
     \label{fig:genshor}
 \end{figure}
 
To generate a stabilizer, we first write down a stabilizer of a seed code and then push it to the dangling legs. For example, first put down the inner code stabilizer $XX$ acting on its tensor (Figure~\ref{fig:genshor} left). Cleaning, or pushing the $X$ off of the connecting edges using symmetries of the green tensors, we add $X^{\otimes 10}$ on ten of its dangling physical legs. A similar procedure can be repeated to produce a logical operator (Figure~\ref{fig:genshor} right). It is clear that in this graphical language, we can simply represent the concatenation of single qubit ($k=1$) codes as trees where each tensor is obtained from the encoding isometry of the inner/outer codes. The tensors and their connectivity then fully specify the details of the concatenation. For block codes ($k>1$), a richer network connectivity can follow, allowing for other geometries to emerge, e.g. holographic codes.

While graphical rules are intuitive and useful for visualization, for practically efficient implementation on a computer, it is often convenient to describe tensor gluing of stabilizer codes using an equivalent check matrix operation called \emph{conjoining}. Since tensors obtained from (non-Abelian) stabilizer codes can be specified by the corresponding check matrices of their dual states $\{|V_k\rangle\}$, each diagrammatic gluing operation can be equivalently described by a conjoining operation $\wedge$ over the check matrices of $\{|V_k\rangle\}$. This is defined for the check matrices of Pauli stabilizer codes/states over qudits of prime dimension in the Appendix of \cite{QL1}. Interestingly, Pauli stabilizer codes are closed under conjoining, i.e., gluing together Pauli stabilizer codes/states always produce another Pauli stabilizer code/state. Similar conjoining operations can also be generalized to certain non-Abelian stabilizer codes\cite{xpql} where the check matrices are no longer over finite fields, but the closure property is no longer satisfied.

\section{qLDPC from Conjoining}\label{sec:qldpc}
In the following, we will focus on the gluing/conjoining of Pauli stabilizer codes over qubits\footnote{Since we only consider stabilizer codes, we will use gluing and conjoining interchangeably.}. In fact, for simplicity we will only be conjoining two types of codes: the bit-flip and phase-flip repetition codes. Despite this simplicity, such concatenated codes have been able to produce impressive results\cite{Yamasaki_2024, yoshida2024concatenate, gottesman2009introductionquantumerrorcorrection}, but there is much room for improvement. In particular, for the purposes of the paper, the support of the stabilizer checks under concatenation grow with the number of layers; such large weight operators are not only experimentally costly to implement for error detection, but also potential sources for spreading errors, unless FT schemes like Shor/Steane/Knill error correction are used that in turn incur additional overhead associated with ancilla state preparation.

An obvious solution to this problem is to consider qLDPC such that the Tanner graph of the code has bounded degree so that each check only acts on a constant number of qubits, and each qubit is only checked by a constant number of checks. It was shown that LDPC codes permit efficient overhead scaling\cite{gottesmanldpc_overhead} if the code has linear distance and rate\cite{leverrier2022quantumtannercodes, goodldpc2, Panteleev_2022, pattison_qldpc, hayata_qldpc}. However, unlike concatenated codes, it is unclear how qLDPC codes or more general sparse codes can be grown using smaller codes through a concatenation process. 

From the point of view of operator flow, it is clear that check weights do not typically decrease under concatenation if we also want the code distance to increase. Intuitively to get sparse codes, we aim to create operator flows wherein the support of weights are reduced generally, but without damaging the support of the minimal weight logical operators. Informally, a concatenation approach that reduces check weight without reducing distance is to choose ``bad'' block codes where high weight logical operators have low distance whereas the low weight logical operators have distance at least their logical weight, then as long as logical operators and checks are oriented in such a way that the qubits on which it acts has a larger overlap with checks than with logical we would be done.

Of course this is impossible, at least for stabilizer codes. As each logical $\bar{X}_j$ has logical weight $1$, we would require it to have physical weight $1$, which leads to the trivial code. Putting weight $1$ aside, there are many distance $d=2$ codes that illustrate this. For example, consider the $[[6,4,2]]$ iceberg code\cite{self2024protecting}. We have logical operators $\bar{X}_j = X_0 X_j$ and $\bar{Z}_j = Z_jZ_5$, for $j=1,2,3,4$. Weight two logical-$X$ operators can be represented as weight two physical operators $\bar{X}_j\bar{X}_k = X_j X_k$ and similarly for logical-$Z$ operators. The same holds at weight three, e.g. $\bar{X}_1\bar{X}_2\bar{X}_3 = X_0 X_1 X_2 X_3 = X_4 X_6$. And similarly at weight four: $\bar{X}_1\bar{X}_2\bar{X}_3\bar{X}_4 = X_0 X_6$ and $\bar{Z}_1\bar{Z}_2\bar{Z}_3\bar{Z}_4 = Z_0 Z_6$.

Now suppose we have a CSS code with some large weight $X$- or $Z$-checks. Continuing our example, we aim to find a subset of physical qubits $A$, with $|A|=4$, that has significant overlap with the support of some of these large weight checks. Then we ``partially'' concatenate, in that the physical qubits in $A$ are now identified as the logical qubits of this iceberg code. The upshot is that (i) any operator that overlaps $A$ in at least $3$ places will have its weight reduced, (ii) operators that intersect $A$ in one place will have its weight increased by one, and (iii) a weight $6$ $X$-check and $Z$-check will be added to the code. Writing $w$ as our desired check weight bounds, this operation is productive if the following conditions are satisfied:
\begin{enumerate}
    \item (as after conjoining must have distance at least $d$) whenever $\bar{L}$ is a nontrivial logical operator with $|\mathrm{supp}(\bar{L})\cap A| = 3+a$ then $\mathrm{wt}(L) > d + a$ pre-trace;

    \item (as after conjoining we must satisfy the check bound $w$) whenever $S$ is a check operator with $|\mathrm{supp}(S)\cap A| = 1$ then $\mathrm{wt}(S) < w$ pre-trace; and,

    \item $w \geq 6$.
\end{enumerate}
Upon conjoining, any check operator $S$ with $|\mathrm{supp}(S) \cap A| \geq 3$ will have its weight reduced in the new code. Additionally,
\begin{itemize}
    \item by (1) above, the weight of the conjoined code remains at least $d$;
    \item by (2) above, any check operators with $\mathrm{wt}(S) \leq w$ will remain so; and,
    \item by (3) above, all new check operator have weight $\leq w$.
\end{itemize}

The key property of the iceberg code is that after some threshold $t$, any logical operator $\overline{L}$ with logical Hamming weight $\overline{\mathrm{wt}}(\overline{L}) \geq t$ can be represented by a physical operator of Hamming weight $\mathrm{wt}(\overline{L}) \leq t$. For example, the max rate holographic pentagon code also satisfies this for some choice of the bulk and boundary qubits, but so do many codes where $k>d$.

After that, it is just a matter of aligning the set of legs $A$ so that logical operators of the original code of weight $\geq t$ do not have their weight reduced below $d$, and stabilizer of weight $\leq t$ do not have their weight increased above $w$. Naturally, new checks introduced by the conjoined code must have weight $\leq w$.

Unfortunately, large codes like this are difficult to work with and align with appropriate subsets of legs, which is important for our strategy. Instead we start with the simplest codes given by the fusion of two $2$-qubit repetition codes (equivalently $X$- and $Z$-spiders)  as in Figure~\ref{fig:zxspider}. These could also be obtained from the encoding tensors of $3$-qubit repetition codes. We will define the $4$-legged green (respectively red) tensor as a non-isometric code of $Z$-type (respectively $X$-type) when two of its legs are used as input/output. Strictly speaking, they are not codes as their encoding map contains a kernel, and hence operator pushing is therefore non-isometric\footnote{Nonetheless if one where to mod out the kernel one would recover a usual repetition code.}. 

These non-isometric codes satisfy our previous criteria. The $Z$-type code (i) maps high logical weight (namely weight two) $Z$-operators to the identity, (ii) preserves low logical weight as weight one $Z$-operators push to weight one operators, and (iii) adds one weight-4 $X$ symmetry. It can be shown that this symmetry will preserve the weight of any $X$ checks $S$ such that $|\mathrm{supp}(S)\cap A|=2$ whereas it will merge any two $X$ checks $S$ and $S'$ where $|\mathrm{supp}(S)\cap A|=|\mathrm{supp}(S')\cap A|=1$ into one bigger check $S''$ with weight $\mathrm{wt}(S'')=\mathrm{wt}(S)+\mathrm{wt}(S')$. Similarly, the $X$-type code has the same properties with the role of $X$ and $Z$ reversed.

Interestingly, objects of such kind do occur in the construction of topological codes using quantum lego \cite{QL1,QL2}. Therefore, we focus on these simple components and see where it takes us\footnote{In a crude sense, they may be thought of as a $d=0$ code that encodes 2 qubits as the logical $\bar{Z}\bar{Z}$ operator is formally mapped to the identity operator, which has weight $0$.}.
Although we use non-isometric codes here, we do not claim that they are strictly necessary. Indeed, if one considers an encoding map of a CNOT, it also satisfies the property we ask for, as do other codes where $d<k$. Yet, we do not want to expand any checks that only overlaps with one qubit. As such, the simplest component we can ask for is a code that does not expand any flow with $1$ input, but reduces a flow that has more than $2$ inputs. The smallest such example is a rank $4$ tensor, and by our restriction, precisely the $4$-qubit GHZ state we are using in Figure~\ref{fig:zxspider}b. 

\begin{figure}
    \centering
    \includegraphics[width=0.8\linewidth]{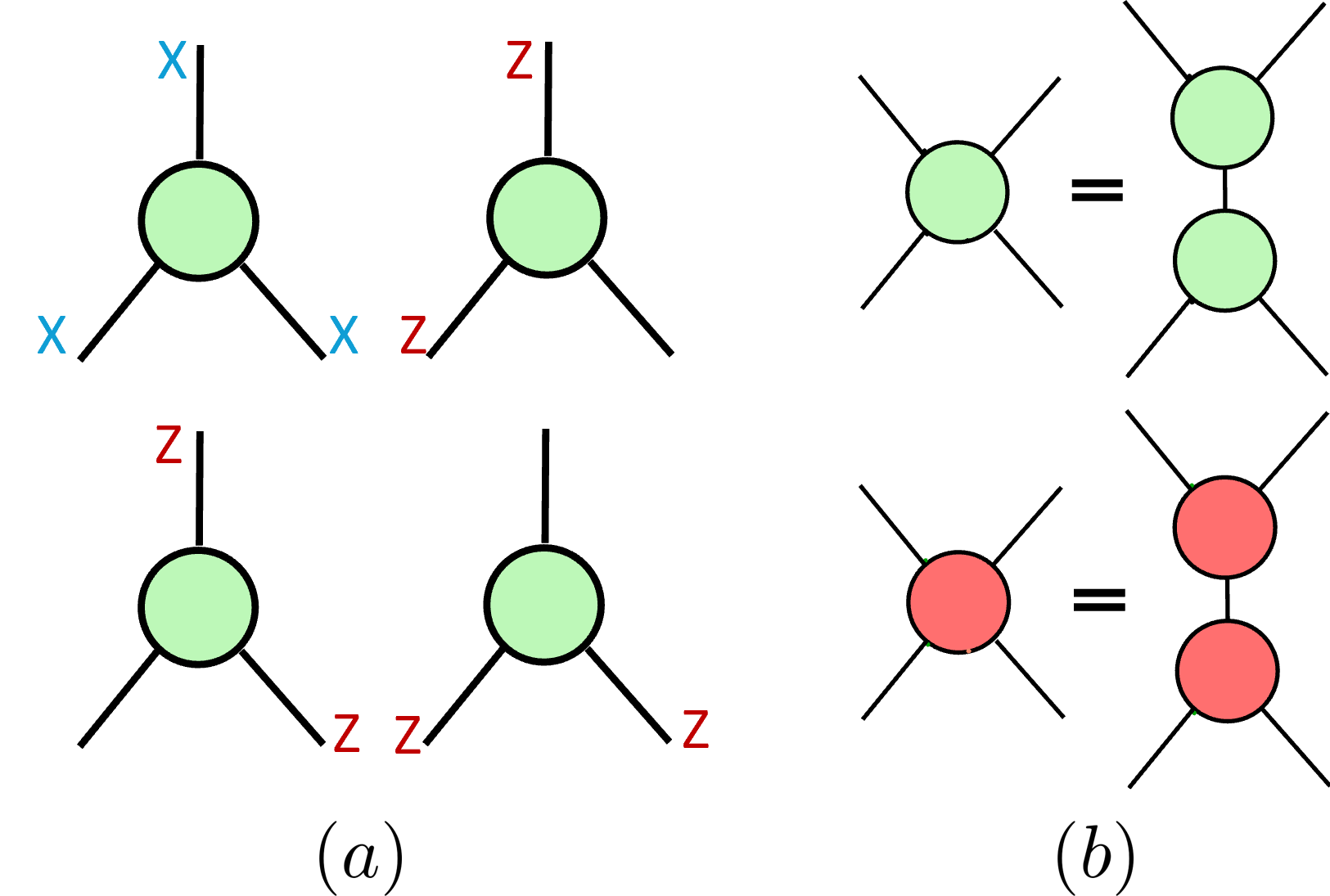}
    \caption{(a) Z spiders (green) of valence 3 are represented as rank 3 tensors which has a number of symmetries generated by the Pauli operators. For X spiders (red) of the same type, one swap the roles of X and Z and the same symmetries apply. (b) These spiders can be merged to form larger spiders, which are precisely our X and Z type non-isometric tensors. }
    \label{fig:zxspider}
\end{figure}

\subsection{Example: Surface code}
A simple non-trivial example of a qLDPC code that this protocol produces is the rotated surface code.
\begin{figure}
    \centering
    \includegraphics[width=0.95\linewidth]{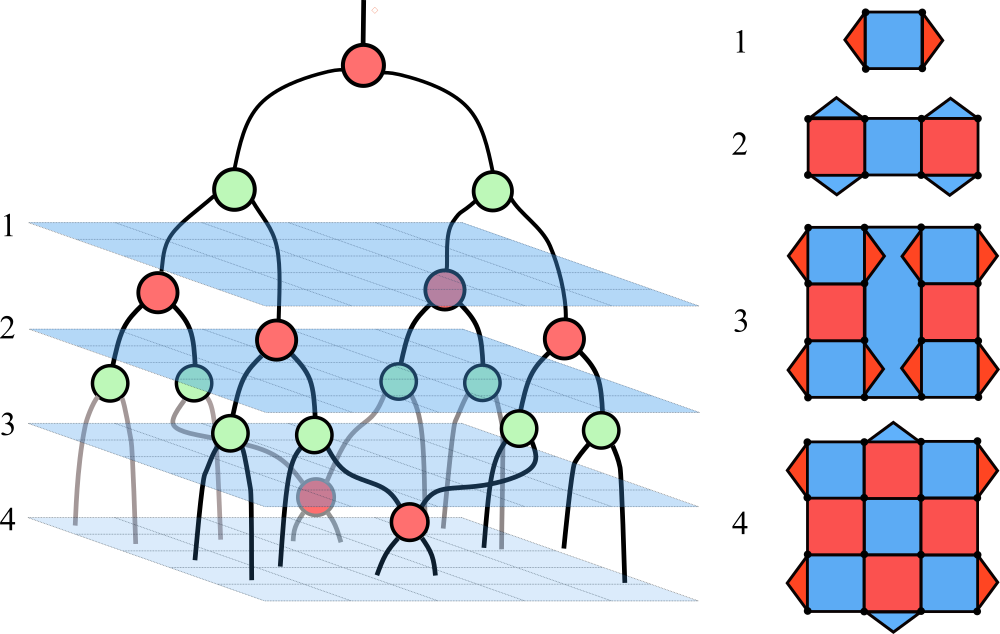}
    \caption{Layer 1 represents a $[[4,1,2]]$ code while layers 2 and 3 increases the $Z$ and $X$ distances respectively through code concatenation. Finally, non-isometries are applied at layer 4 to decrease the check weights. The corresponding stabilizer code each layer generates is marked accordingly on the right, where red or blue plaquettes denote weight-4 $X$- or $Z$-checks while triangles denote weight-2 checks.}
    \label{fig:surface_code_TN}
\end{figure}
The protocol starts from a $[[4,1,2]]$ seed code, then sequentially concatenate using $X$- and $Z$-repetition codes on the boundary of the code to increase the distance. When the check weight exceeds a prescribed limit, we reduce the weight using the non-isometric element on the boundary. An explicit sequence is shown in Figure~\ref{fig:surface_code_TN} where only 4 layers are drawn reduce clutter. However, this procedure can simply be iterated to grow out a rotated surface code to arbitrary size. 

We can understand this by examining this example in a little more depth. Let us first note the useful graphical transformation rule of the concatenation and weight reduction steps. Code concatenation to increase the $Z$ (or $X$) distance stretches the size of the red (or blue) plaquette operators. Then weight two operators of the opposite color are added between the newly grown sites and the previous boundary qubits. For instance, going from layer 1 to 2, the red side triangles get stretched to red squares while adding blue triangles between the boundary qubits of the $[[4,1,2]]$ code and the new boundary formed from concatenation. Going from layer 2 to 3, we now perform a bit-flip code concatenation that increases the $X$-distance. This procedure now stretches the blue plaquettes and triangles in the vertical direction, then adding weight 2 checks in the appropriate edges of the opposite type. Finally, non-isometric tensors of the X type are applied to the 2nd and 3rd qubits on rows 1 and 4. Recall that each $X$-type non-isometric tensor splits off an X check of weight 2 from the previous check, and then merges any $Z$-checks that only have support on one of the qubits. Therefore, they convert each pair of weight-2 $Z$-checks in the middle of layer 3 into $Z$ plaquettes in layer 4.

This procedure can then be repeated in the following sequence to increase the code distance by one without ruining sparsity: 1. increase $Z$-distance by concatenating all boundary qubits along the vertical direction using phase-flip codes. This stretches any weight-2 $Z$-check into weight-4 $Z$ plaquettes, and stretches the original weight-4 $Z$ plaquettes into weight-6 checks. It also adds weight-2 $X$-checks along the newly added edges. 2. Apply $Z$-type non-isometry to all weight-6 $Z$ checks, splitting them into weight-4 Z plaquettes and weight-2 $Z$-checks as they merge the weight-2 $X$-checks added from step 1 into weight-4 $X$ plaquettes. 3. Repeat the same steps for the boundary qubits along the horizontal direction but use bit-flip codes for concatenation to increase the $X$ distance and $X$-type non-isometric tensors to reduce weight checks. As the procedure is self-similar, it can be carried on iteratively to produce rotated surface codes of arbitrary distances by growing a tensor network out from the boundary. 

Note that this produces a surface code tensor network which is distinct from the traditional PEPS construction --- code concatenation produces the tree structure while the intermixing of the non-isometries breaks the exact tree-like expansion. It's also different from the MERA construction \cite{Aguado_2008} of Kitaev's quantum double model, as the code is grown from the boundary and makes extensive use of non-isometry during contraction. The new structure may lead to more efficient contractions, especially for weight enumerator calculations. We leave a more careful analysis and application of such tensor networks for future work.

\subsection{Example: Quantum Compass code}
The non-isometric tensors can be also be used to produce other examples, e.g. the 2d compass codes. 
The representation is certainly highly non-unique. Here we first produce a Bacon-Shor code through concatenation then deform the code to other ``gauges'' through non-isometric tensor contractions. 
Since we are working with planar geometries, for the sake of clarity and simplicity, we will only use the 2d representation and the associated graphical transformations as in the surface code example above. 

Recall that a Bacon-Shor code $[[\ell^2,1,\ell]]$ can be obtained by concatenating each physical qubit of the $\ell$-qubit phase flip repetition code with a copy of the $\ell$-qubit bit-flip repetition code. These repetition codes are contractions of $X$- and $Z$-spiders of valence 3, and so they themselves can be built up from 2-qubit repetition codes \cite[Fig 22]{QL2}. The resulting code contains weight-2 $ZZ$ generators on every vertical edge while the $X$ checks are weight $2\ell$ supported on two adjacent columns (Figure~\ref{fig:compasscode}).
\begin{figure}
    \centering
    \includegraphics[width=1\linewidth]{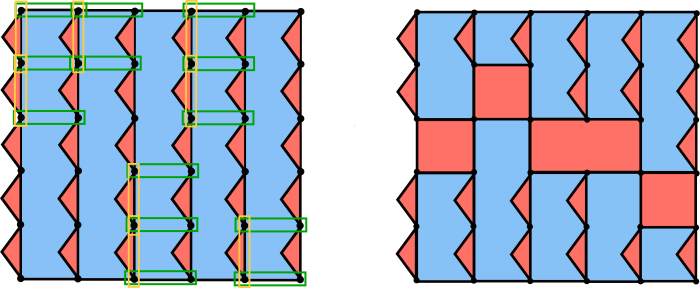}
    \caption{Left: a $[[36,1,6]]$ Bacon-Shor code where each red triangle denotes a weight-2 $Z$ check while blue rectangles denote weight-12 $X$ checks. Physical qubits with green boxes are contracted with a valence-4 $X$-spider (i.e. $X$-type non-isometric tensor) which are then followed by contracting valence-4 $Z$-spiders (i.e. $Z$-type non-isometric tensor) on qubits in yellow boxes. Right: the resulting $[[36,1,6]]$ 2d compass code with lower weight checks where $X$- and $Z$-checks are marked as blue and red boxes/triangles.}
    \label{fig:compasscode}
\end{figure}

The 2d compass codes can be thought of as different gauge-fixed versions of the 2d Bacon-Shor code. Focusing on the subspace implementation for this section, we note that $X$-checks can be broken up into smaller pieces as needed by applying $X$-type non-isometries followed by a merge using the $Z$-spider from Figure~\ref{fig:zxspider}.

For example, to carve out a weight-$2m$ $X$-stabilizer from a weight-$2n$ stabilizer from rows $j+1$ to $j+m$ and along columns $i,i+1$, one applies the $X$-type non-isometry $m$ times on the pairs of qubits $(k,i)$ and $(k,i+1)$ with $k\in\{j+1,\dots j+m\}$. It is then followed by $m-1$ pairwise $Z$-type non-isometries on qubits $(k,i)$ and $(k+1,i)$ for $k\in\{j+1,\dots j+m-1\}$. This latter action can be simplified into a single $Z$-spider of degree $2m$. The resulting $X$-check has the targeted weight with weight-2 $Z$ checks supported on each edge connecting the two adjacent qubits on the same column. An example of this for $\ell=6$ is shown in Figure~\ref{fig:compasscode} (right).

One should note that depending on the size of the region that is contracted with non-isometries--which we call an island--and the connectivity across these islands, the contractibility of the tensor network also varies. For example, in the tensor network that generates the configuration in Figure~\ref{fig:compasscode}, the non-isometries change the graph connectivity by connecting qubits in the overlapping green and yellow boxes. However, these islands are not connected, and the tensor network can still be efficiently contracted in a fashion similar to the original tree tensor network but with larger branching ratio determined by the size of the largest island. If the non-isometries are highly connected such that the islands merge into a single entity of order system size, which for instance is needed to produce the surface code, then there is no obvious exact contraction scheme that is more efficient than the usual PEPS, which scales exponentially with $\ell$.

\begin{remark}
    While it is of independent coding-theoretic interest to study how small codes can be glued together to form bigger LDPC codes, it is also useful to consider the physical processes through which these tensor contractions can be performed, for instance, for encoding and state preparation. Explicitly, the contractions tied to code concatenations are simply entangling fresh ancilla with CNOT gates while the non-isometric concatenation corresponds to a projective measurement of $XX$ or $ZZ$ operators. Both are procedures that can be implemented physically --- the former with unitary gates, and the latter with two-body Pauli measurement with post-measurement Pauli gates conditioned on the classical measurement outcomes. As quantum lego blocks, it is also clear that the tensor contraction can be mapped to Bell fusions\cite{bartolucci2021fusionbasedquantumcomputation}, where only measurement-based processes are needed.  
\end{remark}

\section{Building sparse subsystem codes}\label{sec:4}
In the above examples, we oriented the non-isometries so that we reduce the check weights but are careful so that we only overlap with minimal weight logical operators on one site, hence not reducing the distance. Nonetheless, the above procedures are still not quite satisfactory as every time we apply these tensors we merge the checks of the opposite type. If we are not careful about how or where it is applied, this can drastically increase the check weight, which severely limits the level of generality we have. Without further knowledge of the graph, it seems difficult to control the extent of such merges throughout the code building process. One obvious strategy is to restrict to special graphs with particular properties. Here we take a different tack: we simply use a tensor that reduces the weight without merging operators of the opposite type. This requires us to increase the rank of this component tensor to 5 and construct sparse subsystem codes instead of sparse stabilizer codes or quantum LDPC codes. 

Subsystem codes \cite{Poulin_2005} can be constructed from a gauge group $\mathcal{G}=\langle g_1, g_2, \dots, g_r\rangle$ with generators $\{g_i\}$. Different from conventional stabilizer codes, the group $\mathcal{G}$ need not be Abelian. 
\begin{definition}
A subsystem code is CSS-like if there exists generators of its gauge group $\mathcal{G}$ each of which is expressed solely as a tensor product of $X$ and $I$, or a tensor product of $Z$ and $I$. 
\end{definition}

In this work, we focus on CSS-like subsystem codes for simplicity.

As is usually the case, it is conceptually simpler to think of the subsystem code as a stabilizer code whose stabilizer group is the center of the gauge group $\mathcal{S}=Z(\mathcal{G})$. We then split the logical degrees of freedom of this stabilizer code into two classes: the ones that we use to encode information (the logical qubits) and the ones we sacrifice (the gauge qubits). In the Hilbert space language, the code subspace of the stabilizer code $C$ is expressed as
\begin{equation}
    C=C_{\rm logical}\otimes C_{\rm gauge}
\end{equation}
where we will only protect information in $C_{\rm logical}$. The gauge group $\mathcal{G}$ is the group of operators that correspond to the logical operations (including identity) on the gauge qubits, but acts trivially on $C_{\rm logical}$.

\subsection{Atomic components}
We will use two $[[4,1,2]]$ legos in our constructions, each of which can be built from $Z$- and $X$-spiders.
\begin{definition}
    By $ZN_{[[4,1,2]]}$ and $XN_{[[4,1,2]]}$ we denote the $Z$- and $X$-non-isometric $[[4,1,2]]$ legos, constructed as follows. The stabilizer of $ZN_{[[4,1,2]]}$ is 
    $$S_{ZN}=\langle XXXX,ZZII,IIZZ\rangle,$$
    with choices for its logical operators $\bar{Z}=ZIZI,IZIZ$ and $\bar{X}=XXII,IIXX$. For $XN_{[[4,1,2]]}$ its stabilizer is
    $$S_{XN}=\langle ZZZZ,XXII,IIXX\rangle,$$ 
    with choices for its logical operators $\bar{X}=XIXI,IXIX$ and $\bar{Z}=ZZII,IIZZ$. When viewed as non-isometric legos, we interpret legs 1, 3 as inputs and 2, 4 as outputs. The logical qubit becomes a gauge qubit (Figure~\ref{fig:ZNtensor}).
\end{definition}

\begin{figure}
    \centering
    \includegraphics[width=0.95\linewidth]{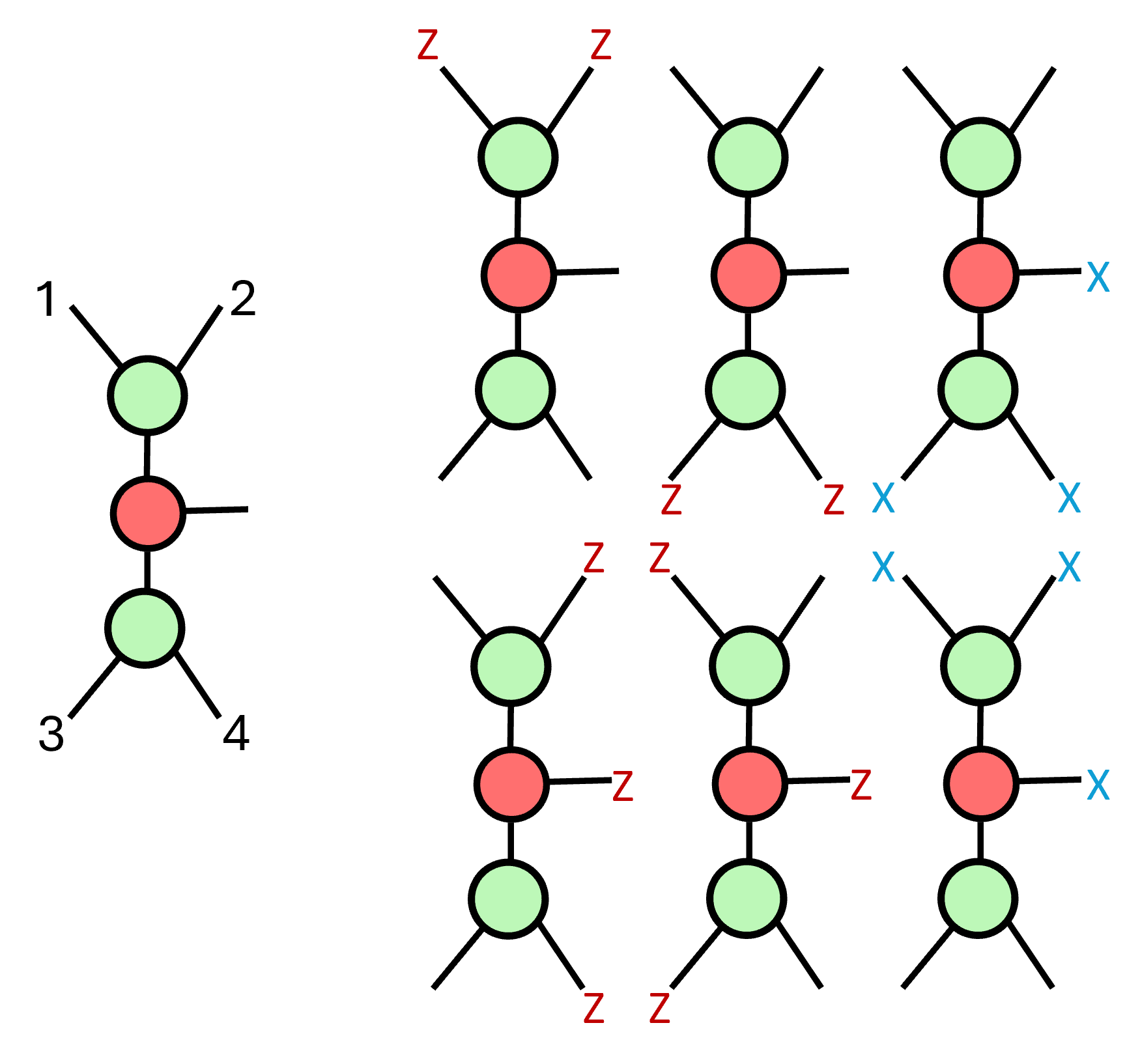}
    \caption{Symmetries of the $ZN$ tensor describing a $[[4,1,2]]$ code where the middle leg is the logical degree of freedom. $XN$ tensor is identical except exchanging the red and green colors and $X$ and $Z$ in the symmetries.}
    \label{fig:ZNtensor}
\end{figure}

Note that in our tensor representation, the logical qubit of these 4 qubit codes will never be contracted and so will become a gauge qubit in the resulting sparse subsystem codes. Hereafter, we will simply refer to this logical leg of our non-isometric legos as a gauge qubit.

Contracting the input legs (1 and 3) of the $ZN$ tensor with the output legs of an encoding map induces a map between quantum codes. This map transforms operators that have support over the two contracted legs as follows:
    \begin{itemize}
        \item any operator with support $ZZ$ on the input legs get mapped to $II$ while activating the $Z$ operator on the gauge qubit or to $ZZ$ on the output with $I$ on the gauge qubit;
        \item $ZI$ or $IZ$ operator can be mapped to $ZI$ or $IZ$ on the output without acting on the gauge qubit; 
        \item $XI$ or $IX$ operator can be mapped to $XI$ or $IX$ on the output by activating the gauge qubit with a logical $X$; and
        \item any $XX$ operator is mapped to $XX$ on the output without activating the gauge qubit, i.e. acts as identity on the gauge qubit.
    \end{itemize}
These transformations can be read off from the symmetries in Figure~\ref{fig:ZNtensor}, or recognizing that $ZZZZ$ and $XXXX$ are stabilizers of the $ZN$. For $XN$, the same results hold after swapping $Z\leftrightarrow X$ in the above.

As a consequence, the mapping $ZN$ (and similarly for $XN$) has the following effects on the gauge generators and logical operators. We only present the purely $X$ and purely $Z$ operators since any operator involving $Y$ can be produced by taking their product. 

Let $B=\mathrm{supp}(O)\cap \mathrm{supp}(ZN)$ be the non-trivial support the operator $O$ in question has with the input legs of $ZN$. We also denote bare logical operators of type $P$ as $\bar{P}_0$, gauge logical operator as $\bar{P}_g$, gauge generator of type $P$ as $g_P$ with $P=X, Z$. The transformations under $ZN$ is summarized by Table~\ref{tab:ZNtransformation}. A similar table can be compiled for $XN$ where we simply exchange the role of $X$ and $Z$.\footnote{Some examples of codes that involve non-isometric contractions of such $ZN$, $XN$ tensors can be found in \cite{QL1,evenblycode} for 2d compass/Bacon-Shor code and holographic (Evenbly) code.}

\begin{table*}[]
    \centering
    \begin{tabular}{|c|c|c|}
    \hline
        Input/Outputs & Non-active gauge qubit & Active gauge qubit  \\
        \hline
        $\bar{Z}_0$ and $|B|=2$ & no change & $\bar{Z}_g$ where $\mathrm{supp}(\bar{Z}_g)=\mathrm{supp}(\bar{Z}_0)\setminus B$\\
        \hline
        $\bar{Z}_g$ and $|B|=2$ & no change & $\bar{Z}_g'$ or $\bar{Z}_0'$\footnote{This is because gauge legs might become correlated in the tensor network.} s.t. $\mathrm{supp}(\bar{Z}_g',\bar{Z}_0')=\mathrm{supp}(\bar{Z}_g)\setminus B$ \\
        \hline
        $g_Z$ and $|B|=2$ & no change &  $g_Z'=ZZ$ on $B$ and $g_Z''$ on $\mathrm{supp}(g_Z)\setminus B$\\
        \hline
        $\bar{Z}_0$ and $|B|=1$ & no change & $\bar{Z}_g'$ with $\mathrm{supp}(\bar{Z}_0)\cup B\setminus \mathrm{supp}(\bar{Z}_0)\cap B$\\
        \hline
        $\bar{Z}_g$ and $|B|=1$ & no change & $\bar{Z}_0'$ or $\bar{Z}_g'$ with  $\mathrm{supp}(\bar{Z}_g)\cup B\setminus \mathrm{supp}(\bar{Z}_g)\cap B$\\
        \hline
        $g_Z$ and $|B|=1$ & no change & $\mathrm{supp}(g_Z)\cup A\setminus \mathrm{supp}(g_Z)\cap B$\\
        \hline
        $\bar{X}_0$, $\bar{X}_g$, $g_X$ and $|B|=2$ & no change & not permitted\\
        \hline 
        $\bar{X}_0, \bar{X}_g, g_X$ and $|B|=1$ & not permitted & $\bar{X}_g,\bar{X}_g',g_X'$ with the same support \\
        \hline
        Any operator and $|B|=0$ & no change & no change\\
        \hline
    \end{tabular}
    \caption{How $X$- and $Z$-type operators transform under the mapping $ZN$. For $XN$, one exchanges $Z$ and $X$.}
    \label{tab:ZNtransformation}
\end{table*}

\subsection{The algorithm}\label{subsec:algo}
The algorithm for growing sparse code is as follows. In particular, let the target $X$ and $Z$ generator weights be bounded by $w_X, w_Z$ and qubit degree be bounded by $q_X, q_Z$ for $X$ and $Z$ checks respectively. Each $w_P,q_P\geq 2$.
\begin{enumerate}[label*=\arabic*)]
    \item Initialize with a seed $[[n_0,k,d_0]]$ CSS-like subsystem code that satisfies the above sparsity condition. Identify a representative of a bare logical-$X$ ($\mathcal{L}_X=\{\bar{X}_0\}$) and bare logical-$Z$ operator ($\mathcal{L}_Z=\{\bar{Z}_0\}$) for each logical qubit. 
    \item For the $j$-th logical $\bar{Z}_{0}^{(j)}$ in $\mathcal{L}_Z$,  concatenate each qubit in the $\mathrm{supp}(\bar{Z}_0^{(j)})$ using a 2-qubit bit-flip code (3-valent $Z$-spider) to increase $X$ distance.
    \item Each $X$ check will get increased by even number of qubits. Pair them up in the support of each check exceeding the weight limit, apply $XN$ to each pair until the check weight is below $w_X$. 
    \item For each $q\in \mathrm{supp}(\bar{Z}_0^{(j)})$, shift one of the $Z$ checks on $q$ to its newly added partner introduced during concatenation to reduce qubit degree.   
    \item Repeat 2)-4) for all bare logical operators in $\mathcal{L}_{Z}$. 
    \item Repeat 2)-5) but with $X\leftrightarrow Z$ to increase the $Z$ distances. 
\end{enumerate}
Note that it is efficient to also track how the check operators and the bare logical operators transform under these steps because we can simply track their support during the growth process, instead of finding a new set of bare logical operators for each iteration. Hence the whole process is polynomial in $n$. 

\begin{theorem}\label{thm:1}
Steps 2-6 increase the distance of the code by at least $1$, while $k$ and the degree of the Tanner graph remains unchanged.
\end{theorem}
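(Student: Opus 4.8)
The plan is to track the effect of one full pass of steps 2--6 on three quantities separately: the distance $d$, the number of logical qubits $k$, and the maximum degree/weight of the Tanner graph. Since the four atomic moves (2-qubit repetition-code concatenation, $XN$ on a pair, $ZN$ on a pair, and ``shifting'' a check to a freshly-added partner qubit) are exactly the operations whose operator-pushing rules were tabulated in Table~\ref{tab:ZNtransformation} and described for the repetition concatenations in Sec.~\ref{sec:qldpc}, the proof is essentially bookkeeping: apply those rules move-by-move and verify the three invariants are maintained (or, for distance, improved) after the whole pass.

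First I would handle $k$ and the Tanner-graph bounds. For $k$: step 2 is isometric concatenation, which does not change the logical content; steps 3 and 4 contract $XN$/$ZN$ legos whose logical leg is declared a gauge qubit and is never contracted, so by the definition of the non-isometric legos and the ``no change'' rows of Table~\ref{tab:ZNtransformation} the bare logical operators survive and no new protected logical qubit is created or destroyed; step 5 (the check shift) is a rewriting within the gauge group. Hence $k$ is invariant. For sparsity: step 2 grows each $X$-check overlapping $\mathrm{supp}(\bar Z_0^{(j)})$ by an even number of qubits; step 3 is explicitly designed to peel weight-2 pieces off via $XN$ until every $X$-check is back below $w_X$, and by the $|B|=2$, non-active-gauge rows the $X$-checks with $|\mathrm{supp}\cap A|=2$ do not merge, so no $Z$-checks blow up (this is the whole point of moving from rank-4 to rank-5 legos); step 4 then redistributes $Z$-checks onto the new partner qubits so that the qubit degree $q_Z$ is restored, and the new weight-2 checks introduced by the legos respect $w_X,w_Z\ge 2$. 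One must also check the newly added qubits themselves satisfy the degree bounds, which follows because each such qubit carries only the weight-2 check from concatenation plus the $X$-type check it was spun off from. Running the mirror argument for steps 5--6 (with $X\leftrightarrow Z$) closes the sparsity claim.

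The distance argument is the substantive part. I would argue that any nontrivial logical operator of the new code has weight at least $d+1$, i.e. its weight strictly increased, using two observations. For logical-$Z$ operators: step 2 is a bit-flip concatenation along $\mathrm{supp}(\bar Z_0^{(j)})$, which stretches $X$-type objects but leaves $\bar Z_0^{(j)}$ representable at the same weight; the genuine increase in $X$-distance comes from the fact that any $X$-logical must now cross the newly concatenated structure, and I would show (via the operator-pushing rules, exactly as in the surface-code walk-through) that a minimum-weight $X$-logical picks up at least one extra qubit. The key safety condition is condition (1)-style reasoning from Sec.~\ref{sec:qldpc}: because we only ever overlap the $XN$/$ZN$ legos with a minimal-weight logical on a single site (the $|B|=1$ rows), steps 3--4 cannot shrink a logical below $d$ — at worst they move its support around by one qubit (the ``$\mathrm{supp}\cup B\setminus\mathrm{supp}\cap B$'' rows) and at best leave it fixed, and the weight-2 pieces added are too small to create a new low-weight logical. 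Combining: after steps 2--4 the $X$-distance is $\ge d+1$ and the $Z$-distance is unchanged; after steps 5--6 the $Z$-distance is also $\ge d+1$; hence $d\to d+1$ at least.

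The main obstacle I expect is the distance non-decrease through steps 3--4 (and their mirror), specifically ruling out that the $XN$ contractions, which can merge two $X$-checks sharing one leg, produce a new \emph{logical} operator of weight below $d$ — one has to argue that any such merged object is either still a stabilizer/gauge element or has weight at least $d$, using that $A$ was chosen inside the support of a bare logical (so the overlap structure is controlled) and that the pre-trace weights obeyed the analogues of conditions (1)--(2). A secondary subtlety is handling operators with $Y$ support: the theorem's moves are phrased for pure $X$ and pure $Z$ operators, so one should note that a general logical is a product of an $X$-part and a $Z$-part and check that the weight bound for the product follows from the two pure cases together with the fact that the $X$- and $Z$-distances were each boosted in different sub-passes. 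Finally I would remark that the ``at least $1$'' is tight and that iterating the theorem gives the $kd^2=O(n)$ scaling claimed in the abstract, but that accounting is for the next result, not this one.
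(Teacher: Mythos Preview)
Your overall decomposition---tracking $k$, the Tanner-graph bounds, and the distance separately---matches the paper's, and your $k$ and sparsity bookkeeping is essentially correct. The gap is in the distance argument.

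The claim that ``we only ever overlap the $XN$/$ZN$ legos with a minimal-weight logical on a single site (the $|B|=1$ rows)'' is false. Step~3 applies $XN$ to pairs of \emph{newly added} qubits, the partners $q'$ introduced by concatenation in step~2. After bit-flip concatenation, any $X$-logical that acted on an original site $q\in\mathrm{supp}(\bar Z_0^{(j)})$ now also acts on its partner $q'$; so if the logical touched two original sites whose partners get paired under the same $XN$, then $|B|=2$ for that lego and the $XX\to II$ row (active gauge) applies, reducing the logical's weight by $2$. Your safety argument therefore does not go through as stated.

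The paper's actual mechanism is a parity argument you do not mention. Any $X$-logical acting nontrivially on the $j$-th logical qubit must anticommute with the bare $\bar Z_0^{(j)}$, hence intersects $\mathrm{supp}(\bar Z_0^{(j)})$ in an \emph{odd} number of sites; concatenation thus adds an odd number of new qubits to its support. The $XN$ contractions introduce weight-$2$ $X$-type gauge generators supported only on those newly added qubits, and multiplying by these can strip the new qubits off only in pairs. Since an odd number were added, at least one survives, giving weight $\ge d+1$. Dually, each gauge generator $g_X$ commutes with $\bar Z_0^{(j)}$ and so overlaps it on an \emph{even} number of sites---this is exactly why the added qubits can be paired up for the $XN$ reduction at all, and why $X$-logicals on the other $k-1$ logical qubits (also even overlap) may fail to gain distance but cannot lose it. Without this odd/even dichotomy the distance claim is unsupported.

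A secondary point: your worry that $XN$ might ``merge two $X$-checks sharing one leg'' and create a low-weight logical is misplaced. That merging behaviour belongs to the rank-$4$ spiders of Sec.~\ref{sec:qldpc}; the whole reason for passing to the rank-$5$ $XN$ lego is that it does \emph{not} merge checks of the opposite type---it only converts the weight-$2$ $ZZ$ stabilizers from concatenation into gauge operators. The paper does separately verify that this conversion cannot correlate a protected logical leg with a new gauge leg (both active and non-active options are permitted for $\bar X$ in Table~\ref{tab:ZNtransformation} with $X\leftrightarrow Z$), which is the right thing to check for preserving $k$, but not via the merging mechanism you flagged.
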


We provide a proof of this theorem in Appendix~\ref{app:a}. The following corollary follows immediately from the theorem, and so we omit the proof.

\begin{corollary}
    By repeating steps 2-5 and growing $X$ or $Z$ distances ($d_X,d_Z$) at different frequencies $m_X, m_Z$, one can produce asymmetric codes whose worst case distance bound satisfy $d_X/d_Z= m_X/m_Z$ in the large $n$ limit. 
\end{corollary}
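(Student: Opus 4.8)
The plan is to leverage Theorem~\ref{thm:1} as a black box and track only the distance bookkeeping across an unequal interleaving of the two half-cycles. First I would isolate the two atomic operations: call an "$X$-step" the execution of steps 2)--5) as written (concatenating along $\mathcal{L}_Z$ representatives, then repairing weight and degree), which by the argument behind Theorem~\ref{thm:1} increases the guaranteed lower bound on $d_X$ by at least $1$ and leaves the guaranteed lower bound on $d_Z$ unchanged; and symmetrically a "$Z$-step" (steps 2)--5) with $X\leftrightarrow Z$) which raises the $d_Z$ bound by $\geq 1$ and leaves the $d_X$ bound untouched. The key structural point I need is that these two operations do not interfere destructively: an $X$-step never decreases the current certified $d_Z$, and a $Z$-step never decreases the current certified $d_X$. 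This is exactly the content already extracted in the proof of Theorem~\ref{thm:1} (each cycle there is literally an $X$-step followed by a $Z$-step, and the theorem asserts both distances go up by at least one with $k$ and degree preserved), so I would state this as a lemma-level consequence rather than re-deriving it.

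Next I would fix a schedule: over a long window of $N$ total half-steps, perform $X$-steps and $Z$-steps in the ratio $m_X : m_Z$, i.e. in every block of $m_X+m_Z$ half-steps do $m_X$ many $X$-steps and $m_Z$ many $Z$-steps (the order within a block is immaterial by the non-interference point above). After $t$ full blocks we have certified lower bounds $d_X \geq d_{X,0} + t\,m_X$ and $d_Z \geq d_{Z,0} + t\,m_Z$, while $k$ is constant throughout and the Tanner-graph degrees stay within $(w_X,w_Z,q_X,q_Z)$ at every intermediate code. Then I would pass to the asymptotic statement: since the additive constants $d_{X,0}, d_{Z,0}$ and any slack in the inequalities are $O(1)$ while the growing terms are $\Theta(t)$, the worst-case distance bounds satisfy
\begin{equation}
\lim_{t\to\infty}\frac{d_X}{d_Z} \;=\; \lim_{t\to\infty}\frac{d_{X,0}+t\,m_X}{d_{Z,0}+t\,m_Z} \;=\; \frac{m_X}{m_Z},
\end{equation}
which is the claimed ratio in the large-$n$ limit (note $n$ itself grows monotonically with $t$ since every half-step concatenates in fresh qubits, so "large $t$" and "large $n$" are interchangeable). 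I would be careful to phrase this about the \emph{guaranteed} (worst-case) distance, matching the corollary's wording "worst case distance bound," rather than claiming the true distance equals this ratio.

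The main obstacle I anticipate is the non-interference claim: I must be sure that interleaving half-steps in an arbitrary order — not just the strict alternation used inside Theorem~\ref{thm:1} — still preserves the previously certified distance of the \emph{other} type. Concretely, after several consecutive $X$-steps the bare representatives in $\mathcal{L}_Z$ have been rewritten (their supports grew and were re-partitioned), and I need the subsequent $Z$-step's distance guarantee to apply to these deformed representatives, not just to pristine ones. I would handle this by checking that the invariant actually maintained by steps 2)--5) is support-local — the proof in Appendix~\ref{app:a} tracks how checks and bare logicals transform via Table~\ref{tab:ZNtransformation}, and those transformation rules are stated for arbitrary operators, not only for the seed's operators — so the guarantee is genuinely inductive and order-independent. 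A secondary, milder issue is confirming that $q_X,q_Z$ and $w_X,w_Z$ bounds are respected uniformly across the mixed schedule, but this is immediate since each half-step individually restores sparsity by construction (steps 3) and 4)).
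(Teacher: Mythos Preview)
Your proposal is correct and is precisely the natural unpacking of what the paper treats as immediate: the paper omits the proof entirely, stating only that the corollary ``follows immediately from the theorem,'' and your argument---isolating the $X$- and $Z$-half-steps, invoking the non-interference guaranteed by the Appendix~\ref{app:a} lemmas (Table~\ref{tab:transformed_wts} shows $d_Z^{(\ell)}$ is unchanged throughout an $X$-step and vice versa), scheduling them in ratio $m_X:m_Z$, and taking the limit---is exactly how one would spell that out. Your caution about arbitrary interleaving is more than the paper bothers with, but it is correctly resolved by the inductive, support-local nature of the transformation rules you cite.
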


Note that this process does not increase the total number of logical qubits $k$, as we have designated all subsequent logical legs to be gauge qubits. To encode enough logical qubits, we need to start with a sparse seed code with large enough $k$. As we have no requirement that $d_0$ has to be large, one can generate a sparse subsystem seed code randomly, use existing sparsification methods like \cite{hastings2016,hastings2023,Vasmer2024,He:2025exv,sparsecodecircuit}, or by applying the non-isometric codes $ZN, XN$ for sparsification.

\subsection{Check matrix transformations}

As we mentioned earlier, for each graphical move that glues together the tensor, there is a corresponding conjoining operation on the check matrix based on operator matching. 
The transformation rule for the check matrix is actually surprisingly simple. Because the code is CSS-like, we write the ``check matrix'' as 
\begin{equation}
    H= H_X\oplus H_Z
\end{equation}
where the rows of $H_X$ and $H_Z$ correspond to the X and Z ``checks'' respectively. However, the rows of $H$ need not have vanishing symplectic inner product since we are dealing with a subsystem code.

Without loss of generality, we give the descriptions for growing the $X$ distance. The protocol for growing $Z$ distance is identical with exchanging the role of $X$ and $Z$. 

The transformations consist of three key moves: concatenation, non-isometric trace, and shifting of the $Z$ checks to reduce $q_Z$. In addition, one needs to keep track of how the bare logical operators transform under these moves to keep the algorithm efficient.

\paragraph{Concatenation.}
Concatenation by repetition code is rote. Given the support of $\bar{Z}_0^{(\ell)}$ for a bare operator of the $\ell$-th logical qubit, we identify the submatrix $h_X$ generated by keeping the columns that correspond to qubits living in the $\mathrm{supp}(\bar{Z}_0^{(\ell)})$. Then the updated check matrix is $H_X'=(H_X|h_X)$ which has $|\mathrm{supp}(\bar{Z}_0^{(\ell)})|$ additional columns.

At the same time we add one additional row $v_{j,i_j}$ for each qubit in the support of $\bar{Z}_0^{(\ell)}$ to $(H_Z|\mathbf{0})$ where $\mathbf{0}$ is a zero matrix of $|\mathrm{supp}(\bar{Z}_0^{(\ell)})|$ columns. Here $v_{j,i_j}$ is a row vector of $n+|\mathrm{supp}(\bar{Z}_0^{(\ell)})|$ entries where the only nonzero entries are on the $j$-th column, which is in the support of $\bar{Z}_0^{(\ell)}$ and the corresponding qubit added in concatenation which we label as the $i_j$-th column. Counting from left to right, suppose column $j$ is the $a$-th qubit in $\bar{Z}_0^{(\ell)}$, then $i_j=n+a$. 

One must also keep track of the symplectic vectors representing the bare logical operators. For each of the bare logical $X$ operators $V_x^{(j)}$, for $j=1,\dots,k$, we again generate the subvector $v_x^{(j)}$ from the columns that have support in $\bar{Z}_0^{(\ell)}$ and append such that $V_x'^{(j)}=(V_x^{(j)}|v_x^{(j)})$.

No changes need apply to the bare logical $Z$s because their weights are unchanged. However, we set $V_z'^{(j)}=(V_z^{(j)}|\mathbf{0})$ where $\mathbf{0}$ is a length $|\mathrm{supp}(\bar{Z}_0^{(\ell)})|$ vector.

\paragraph{Non-isometric trace.}
Now we apply $XN$ to reduce the check weights. For each row in $H'_X$ we check its Hamming weight. If we find a row that is overweight, apply $XN$ to the newly added section of the check matrix $h_X$. Note that each row $v_x$ of $h_X$ must have even weight. Let $w_x$ be a vector that is only nonzero at the first two non-zero entries of $v_x$. We then subtract $w_x$ from each row of $h_x$ if both of the entries of that row are nonzero. We then modify $H'_X$ by adding the row vector $(\mathbf{0}|w_x)$ to it. We assume the vector is padded with enough zeros. Repeat this process until all rows weights of the original $H'_X$ has been checked. Define this new check matrix as $H''_X$. 

Since the non-isometric trace only converts weight $2$ $Z$-type stabilizers to gauge operators, no changes need apply to $H'_Z$. We simply set $H''_Z=H'_Z$ for this stage. For bare logicals, recall that $XN$ does not activate its gauge degree of freedom if $X$-type Pauli strings are pushed to identical Pauli strings (i.e. $XX$ is pushed to $XX$, $IX$ to $IX$ and so on), and so the symplectic vectors of the bare logicals $V_x'^{(j)}$ are unchanged by this process. The bare $Z$ logicals are also unchanged as $XN$ does not act on their support. The new check matrix is $H=H_X''\oplus H_Z''$.

\paragraph{Shifting checks.} 
Finally we use the newly added weight-$2$ $Z$ generators to move some of the $Z$ checks off of data nodes that exceeded $q_Z$ by $1$. For each column in $\mathrm{supp}(\bar{Z}_0^{(\ell)})$, check the column weight. If the Hamming weight of column $j$ exceeds $q_Z$, add $v_{j,i_j}$ to it. Repeat for all such columns that are overweight. 

\paragraph{Iterate.} 
For each $\ell=1,\dots,k$, repeat the above process so that all $d_X$ is increased by at least one for all the logical qubits. To increase $d_Z$, repeat the above and exchange the role of $X$ and $Z$.

\subsection{Asymptotic Scaling}

\begin{theorem}
    The worst case asymptotical scaling of this algorithm will produce sparse codes with parameter $kd^2=O(n)$. 
\end{theorem}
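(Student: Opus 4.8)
The plan is to follow the two quantities that evolve as the algorithm runs --- the guaranteed code distance and the number of physical qubits --- as functions of the number $m$ of completed rounds of steps~2--6. By Theorem~\ref{thm:1} each round raises the distance by at least one, so after $m$ rounds $d \geq d_0 + m$, and $d=\Theta(m)$ once $m$ dominates the seed parameters. It therefore suffices to show that a round carried out while the current distance is $d$ costs $\Theta(kd)$ physical qubits: summing $\sum_{t=1}^{m}\Theta(kt)$ then gives $n = n_0 + \Theta(km^2) = \Theta(kd^2)$, and in particular $kd^2 = O(n)$.

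To count the qubits a round adds, observe that only two of the four moves create new legs. The concatenations of step~2 attach a $3$-valent spider to every qubit in the support of the bare logical $\bar Z_0^{(j)}$ being processed, each such attachment turning one dangling leg into two; summed over the $k$ logical qubits this is $\sum_{j=1}^{k}|\mathrm{supp}(\bar Z_0^{(j)})|$ new qubits, counted with the multiplicity induced by the sequential push-downs. Each non-isometric trace $XN$ of step~3 attaches a rank-$5$ lego that leaves behind exactly one dangling gauge leg, hence costs one qubit; and since concatenating a qubit raises the weight of each of the at most $q_X$ $X$-checks on it by one while no check weight ever increases during step~3, the total over-weight created in the round is at most $q_X\sum_j|\mathrm{supp}(\bar Z_0^{(j)})|$, and each $XN$ removes two units of it, so step~3 adds at most $(q_X/2)\sum_j|\mathrm{supp}(\bar Z_0^{(j)})|$ qubits. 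Step~4 adds none. Hence the per-round cost is $\Theta\!\bigl(\sum_{j=1}^{k}|\mathrm{supp}(\bar Z_0^{(j)})|\bigr)$, with constants depending only on the fixed sparsity bounds $w_X,w_Z,q_X,q_Z$.

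The lower half, $kd^2=O(n)$ exactly as stated, is then immediate and unconditional: every tracked bare logical is a nontrivial, un-dressed logical operator, so its weight is at least the current distance; thus each round costs at least $kd$ qubits and $n \geq n_0 + \sum_{t=1}^{m} k(d_0+\lfloor t/2\rfloor) = \Omega(km^2) = \Omega(kd^2)$. For the matching upper bound --- so that the worst case genuinely saturates $n=\Theta(kd^2)$ --- I would bound $|\mathrm{supp}(\bar Z_0^{(j)})|$ by $O(d)$. Here one uses that a bit-flip concatenation sends $Z\mapsto ZI$ and $X\mapsto XX$, so a $d_X$-round leaves every bare $Z$-logical at its current weight and enlarges a bare $X$-logical only on the qubits it shares with the $\bar Z_0^{(\ell)}$'s being concatenated, while the check-matrix update rules record that steps~3 and~4 do not touch the bare logicals at all. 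Consequently, if the seed's bare representatives are ``separated'' --- each $\bar X_0^{(j)}$ meeting each $\bar Z_0^{(\ell)}$ in at most one qubit, and only for $\ell=j$ --- a short induction shows this property persists through every move (the lone shared qubit of $\bar X_0^{(j)}$ and $\bar Z_0^{(j)}$ is split into two, of which the non-growing operator retains one, so the overlap stays $1$), whence each bare logical gains weight exactly one per round of the matching type and $|\mathrm{supp}(\bar P_0^{(j)})| = O(t) = O(d)$. Combining the two halves yields $n=\Theta(kd^2)$ for such runs.

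I expect this separation-preservation step to be the main obstacle. For a general seed the bare logicals may share many qubits, in which case a single round can roughly double their weights and the upper bound degrades; making the saturation clean therefore calls either for restricting to (or preprocessing into) separated seeds, or for an argument that the algorithm's own choices keep the relevant overlaps bounded. The rest of the accounting, and the inequality $kd^2=O(n)$ literally as stated, are routine.
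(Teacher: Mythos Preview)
Your overall architecture matches the paper's: count qubits added per round as (a constant times) $\sum_j|\mathrm{supp}(\bar Z_0^{(j)})|$, control the growth of the tracked bare logical weights over rounds, and sum. The point of departure is precisely the step you flag as the obstacle.

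Your worry that for a general seed ``a single round can roughly double their weights'' is unfounded, and the paper's proof turns on exactly this. The relevant observation is Lemma~\ref{lemma:baredist} together with the paragraph preceding it: for \emph{any} seed, the overlap
\[
c \;=\; |\mathrm{supp}(\bar X_0)\cap \mathrm{supp}(\bar Z_0^{(j)})|
\]
is invariant under a sub-iteration. The reason is the one you already gave in the $c=1$ case and it works verbatim for arbitrary $c$: bit-flip concatenation on $\mathrm{supp}(\bar Z_0^{(j)})$ leaves $\bar Z_0^{(j)}$ supported on the old copies while the growth of $\bar X_0$ lands entirely on the newly added partners, so the intersection stays on the original $c$ sites; steps~3 and~4 touch neither bare logical. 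Consequently each sub-iteration raises a tracked bare weight by at most the seed constant $c$, so after the $j$th full round the bare weight is $\leq d_0 + jkc$, and summing yields $N \leq \Theta(kD^2)$ for any seed (with constants depending on $k$ and $c$). Your ``separation'' hypothesis is thus unnecessary, and your inductive preservation argument was one sentence away from the general case.

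Two minor remarks. First, the dangling leg produced by each $XN$ is a gauge (logical-type) leg, not a physical qubit; applying $XN$ leaves $n$ unchanged, so your $(q_X/2)$-term should be dropped from the qubit count (this does not affect the scaling). Second, the paper's formal proof only establishes the upper bound on $N$ together with the Theorem~\ref{thm:1} lower bound $d\geq D$, i.e.\ $n=O(kd^2)$; the matching direction $kd^2=O(n)$ is not argued analytically but is exhibited by the Bacon--Shor instance. Your ``lower half'' argument via $|\mathrm{supp}(\bar P_0^{(j)})|\geq d$ is a reasonable addition, though to turn $n=\Omega(km^2)$ into $n=\Omega(kd^2)$ you also need $d=O(m)$, which you assert but do not justify; in the worst-case instance this holds, but it is not a consequence of Theorem~\ref{thm:1} alone.
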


The proof is given in Appendix~\ref{app:b}. We hasten to point out that this scaling saturates the Bravyi-Poulin-Terhal bound on 2d codes \cite{Bravyi_2010}, and is achieved on codes like the 2d Bacon-Shor code. Hence this theorem is optimal given the level of generality. 

However, in generality, it is not clear what the average case scaling should be. Ultimately, this type of scaling depends on the minimal set that one can concatenate to increase the code distance by 1. In other words, what is the minimal set $S$ such that $S$ intersects with all minimal weight logical operator, such that applying concatenation to it will increase the minimal distance? In our algorithm we have chosen a set $S'=\mathrm{supp}(\bar{Z}_0)$ or $\mathrm{supp}(\bar{X}_0)$ such that concatenation adds to all operator weight in the system. In general this need not be the case, and the structure of this set can depend on the type of code or tensor network in consideration. It remains to be seen how to identify $S$. Furthermore, the conjoining step that performs the non-isometric trace does not always reduce to the worst case for all logical qubits. This can also significantly improve the distance scaling.

\subsection{Examples}
Let us look at a few simple examples to build some intuition.

\subsubsection{2d Bacon-Shor code}
Suppose we are given as an initial seed the $2\times 2$ Bacon-Shor code whose $X$ and $Z$ gauge generators are shown in red and blue (Figure~\ref{fig:2dbsc}). Its bare logical $\bar{Z}_0$ is supported on qubits 1 and 2 while the bare $\bar{X}_0$ logical on qubits 2 and 4. Following the algorithm, we concatenate by $2$-qubit phase-flip codes on $\mathrm{supp}(\bar{X}_0)$ (orange circle) to increase the $Z$ distance. Going from (i) to (ii), $d_Z$ is increased by $1$, but so is the generator (red plaquette). One can then reduce the check weight using a $ZN$ on qubits in the yellow circle. This produces two new gauge generators in (iii). Then one repeats by concatenating the $\mathrm{supp}(\bar{Z}_0)$ (purple circle) with bit-flip codes to increase $d_X$ by $1$. The resulting code again has increased the check weights above our threshold, hence we apply $XN$ to each check to reduce their weights back down to two (iv). Finally we arrive at (v), which is a $3\times 3$ Bacon-Shor code. The same algorithm can then be run iteratively to increase distance to arbitrary $d$. 

\begin{figure}
    \centering
    \includegraphics[width=1\linewidth]{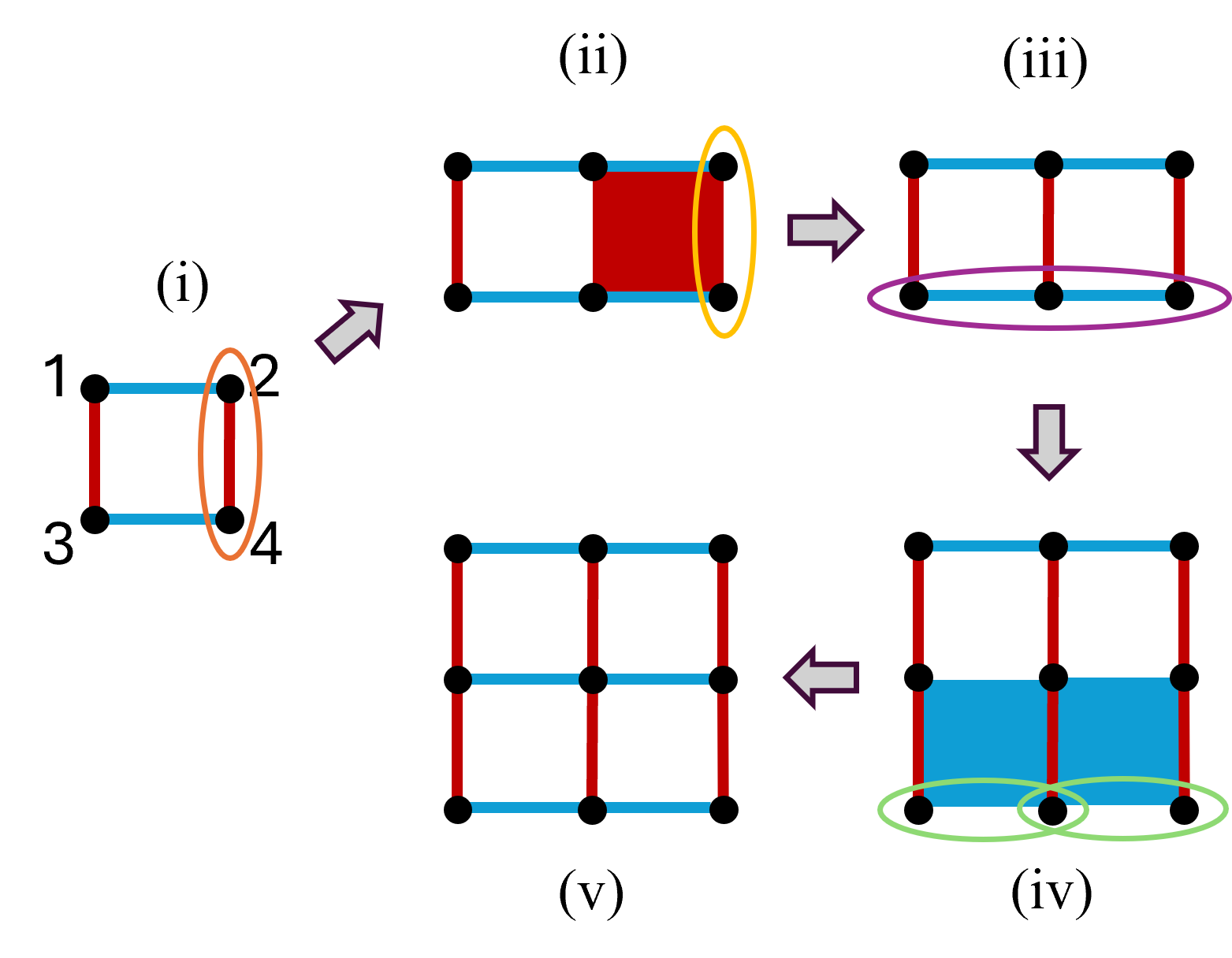}
    \caption{How a 2d Bacon-Shor code can be grown using the algorithm by keeping the target weight as $w_x=w_z=2.$}
    \label{fig:2dbsc}
\end{figure}

\subsubsection{3d Bacon-Shor Code}
Other codes can also be generated by changing the initial seed. For instance, given an $8$-qubit 3d Bacon-Shor code, the algorithm can iteratively build a 3d Bacon-Shor code, Figure~\ref{fig:3dbsc}. As before, the $X$- and $Z$-type gauge generators are weight-2 and labeled as colored edges in blue and red respectively.

Here the bare logical $\bar{X}_0$ is supported on a face, e.g. the qubits circled in orange in (i). A bare logical $\bar{Z}_0$ is supported on a blue edge. Following the algorithm, we can increase the $Z$ distance by concatenating $\mathrm{supp}(\bar{X}_0)$ with phase-flip codes (circled in orange). This also increases the weight of the gauge generators, red faces in (ii). Then we reduce the weight of each face by acting $ZN$ on pairs of qubits circled in yellow, which produces (iii). Then doing the opposite, we concatenate $\mathrm{supp}(\bar{Z}_0)$ (circled in purple) with bit-flip codes to increase $X$ distances. The resulting code (iv) now has weight-$4$ $X$-type gauge generators, which we split and weight reduce by acting $XN$ on the qubit pairs circled in green. Finally in (v), this produces a 3d Bacon-Shor code with one higher distance without ruining the sparsity. 

Repeating the protocol that concatenates the $\mathrm{supp}(\bar{X}_0)$ of (v) and weight reduce, we arrive at (vi). Doing the same to $\mathrm{supp}(\bar{Z}_0)$ we arrive at (vii). Finally, one can choose alternative set of the gauge operators through multiplication. Note that a few red edges, which correspond to elements of the gauge group are also added to make manifest the symmetry of the code. Nevertheless, as generators, some of the colored edges in (viii) are redundant in 3d Bacon-Shor codes. 
\begin{figure*}
    \centering
    \includegraphics[width=0.9\linewidth]{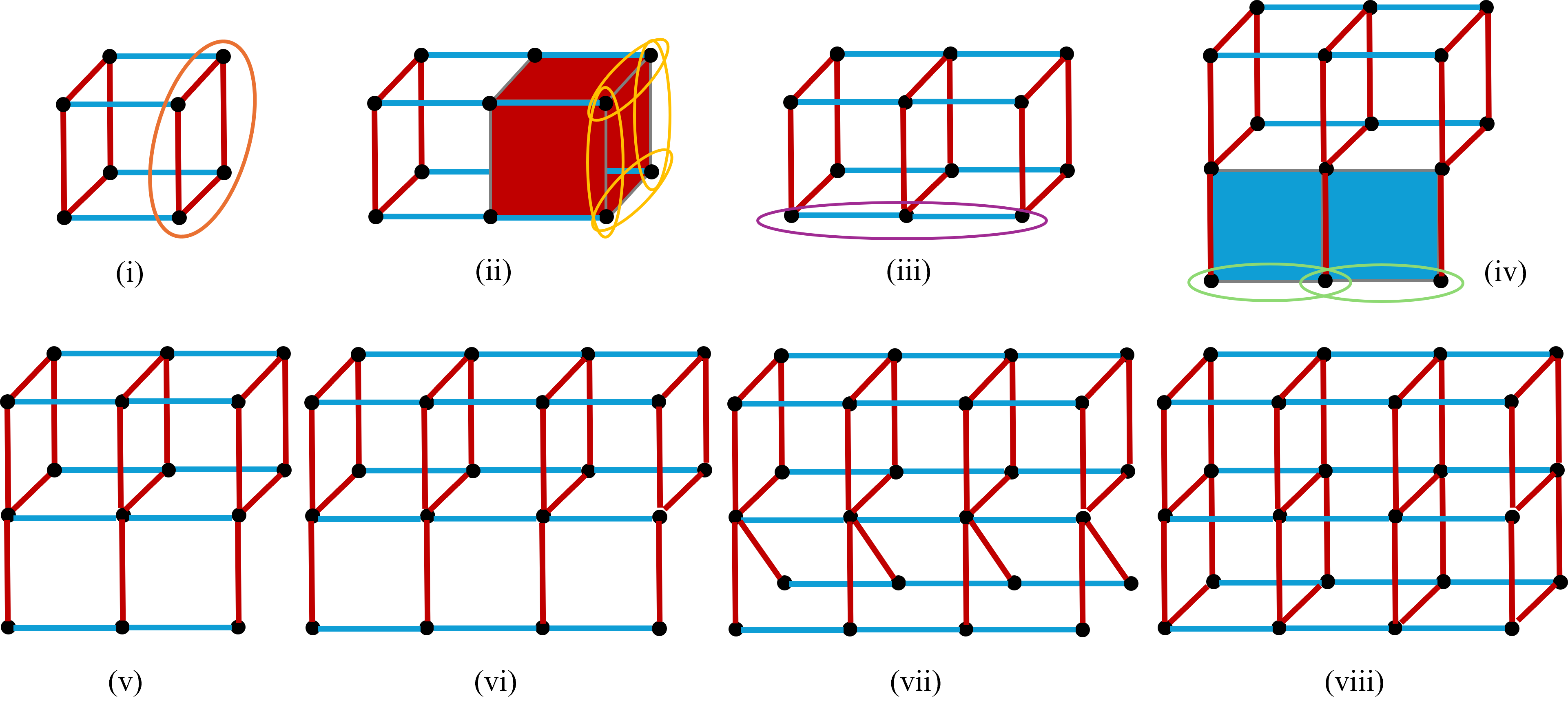}
    \caption{A sequence of concatenation and non-isometric concatenations following the algorithm is used to grow the 3d Bacon-Shor code going from (i) to (viii). The detailed intermediate steps are omitted in the figure going from (v) to (viii). }
    \label{fig:3dbsc}
\end{figure*}

The distance lower bound suggests that the code at (viii) has $d\geq 4$. While $d_Z=4$ saturates this lower bound, $d_X=6$ can be larger, which is different from the 2d example\footnote{Note that the code does not violate the asymptotic distance lower bound as one would naively expect $d\sim n^{1/3}$ in such 3d codes. This is because the code we build does not have equal length in all three dimensions.}. Additionally, the apparent spatial arrangement on a regular cubic lattice is optional, as we can grow out the code along, e.g. $\mathrm{supp}(\bar{Z}_0)$ in other directions and the apparent lattice structure depends entirely on how the red edges are shifted through operator multiplication. Here we use this simple arrangement for the sake of clarity. 

\subsubsection{More general Tanner graphs}
As the algorithm allows for arbitrary weight and degrees, we need not keep the operator weight to 2. The algorithm extends to more general codes, where the Tanner graph has higher but bounded degree. However, these codes usually are not locally embeddable on a manifold with low spatial dimension.

As a demonstration, we grow a sparse code from a $[[4,2,2]]$ seed code with limits such that $q_X\leq 5, q_Z\leq 4, w_X\leq 4$ and $w_Z\leq 4$. Automating the transformation rules on check matrices, we can easily obtain the Tanner graph associated with this subsystem code (Figure~\ref{fig:gen_Tanner}). For the sake of clarity, we build a small $[[684,2,12\leq d\leq 32]]$ code. The lower bound on distance is given by Theorem~\ref{thm:1} while the upper bound is given by tracking the bare logical operators. A tighter limit can probably be obtained through row operations on the check matrix or through the tensor enumerators, though this would require further work for efficient adaptation to subsystem codes \cite[\S{III.D-E}]{QL2}.

\begin{figure}[htp]

\begin{subfigure}{\linewidth}
\includegraphics[width=0.9\textwidth]{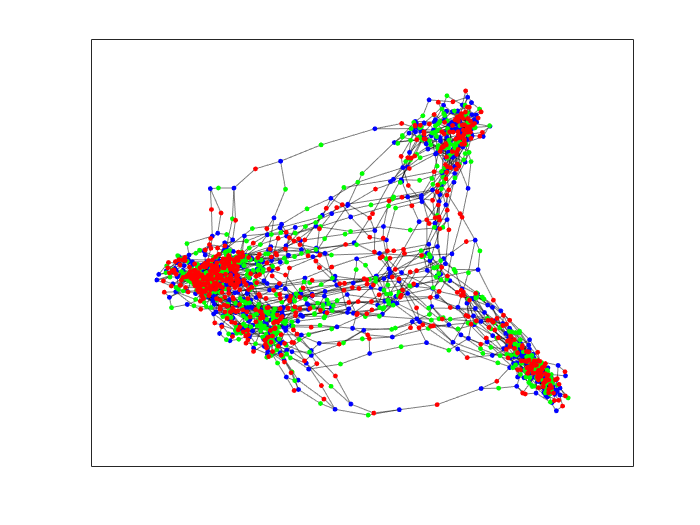}
\caption{Red:Z check nodes, Green: X check nodes, Blue: data nodes.}
\end{subfigure}

\bigskip

\begin{subfigure}{\linewidth}
\includegraphics[width=0.9\textwidth]{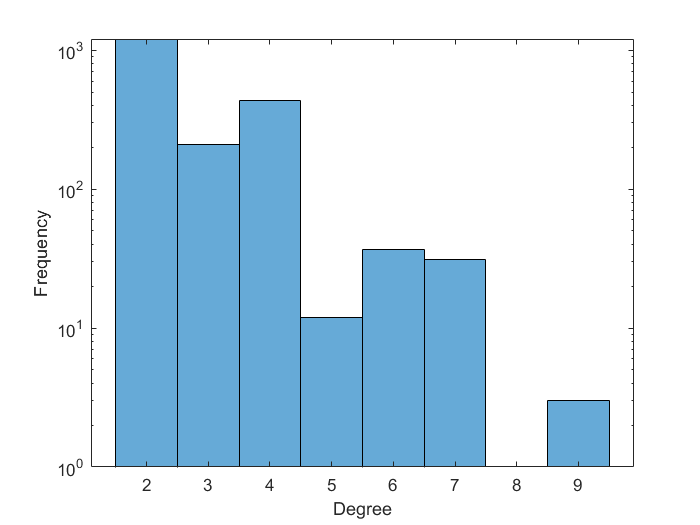}
\caption{Degree distribution of all nodes in the Tanner graph.}
\end{subfigure}

\caption{Tanner graph and degree distribution of a $[[684,2,12\leq d\leq 32]]$ code generated by the iterative algorithm. We have set the upper limits $q_x\leq 5, q_z\leq 4; w_x\leq 4, wz\leq 4$. }
\label{fig:gen_Tanner}
\end{figure}

\section{Is repetition code enough?} \label{sec:5}

A concern from the previous sections may be that the use of only repetition codes would severely limit the variety of codes we can construct. For example, one may worry whether these lego blocks can only obtain examples that are limited variations of the surface code or generalized Bacon-Shor codes. A part of this worry is alleviated by the generic algorithm in \S{\ref{subsec:algo}}, where a wide variety of subsystem codes may be constructed. The same is far less clear for qLDPC codes based only on the descriptions in \S{\ref{sec:qldpc}}. 
It is also unclear that the lego blocks and the algorithm we have will be enough to cover good sparse quantum codes that have linear rate and distance. Fortunately, this worry is misplaced: all CSS codes can be built from the basic components we use here with tensor contraction, i.e., conjoining. 

\begin{theorem}
    Every CSS codes can be built from $2$-qubit bit-flip and phase-flip repetition codes through conjoining.
\end{theorem}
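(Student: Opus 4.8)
The plan is to re-derive, now in the conjoining language, the known fact that phase-free ZX diagrams realize exactly the CSS codes (cf.\ the ZX construction of~\cite{QL1} and the phase-free-ZX literature cited above), exploiting the identification of the $3$-valent $Z$-spider with the encoding tensor of the $2$-qubit bit-flip repetition code (the $\mathrm{COPY}$ tensor) and of the $3$-valent $X$-spider with the $2$-qubit phase-flip repetition code (the parity/$\mathrm{XOR}$ tensor). Concretely the argument splits into two steps: first, obtaining spiders of arbitrary valence by conjoining $3$-valent ones; second, assembling from these a bipartite tensor network that is the encoding isometry of a prescribed CSS code.

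For the first step I would prove \emph{spider fusion from conjoining}: conjoining an $a$-valent $Z$-spider to a $b$-valent $Z$-spider along one leg each yields an $(a+b-2)$-valent $Z$-spider. This is immediate from operator matching (Sec.~\ref{sec:2}): a $Z$-spider is stabilized by $Z$ on any two of its legs and by $X$ on all its legs, and these operators match on the glued edge, generating exactly the stabilizers of the larger $Z$-spider; the analogous statement holds for $X$-spiders. Hence a $Z$-spider (resp.\ $X$-spider) of any valence $d\ge 3$ is a conjoining of $d-2$ copies of the $3$-valent one arranged in a caterpillar, and the degenerate low-valence pieces (a plain wire, the state $|0\rangle+|1\rangle$, the state $|+\rangle$) arise from the same $2$-qubit codes by self-conjoining a leg or by capping with the Bell-like state $|00\rangle+|11\rangle$, which is itself the $2$-qubit bit-flip code state.

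For the second step, given a CSS code with $H_X H_Z^{T}=0$, let $C_1$ be the row space of $H_X$ with basis $G_X$, let $C_2$ be the row space of $H_Z$, and choose representatives $f_1,\dots,f_k\in\mathbb{F}_2^{\,n}$ of the logical $\bar X$ flip patterns, i.e.\ a basis of $C_2^{\perp}/C_1$ (this has dimension $k=n-\dim C_1-\dim C_2$). Build the tensor network with one $X$-spider $v_i$ carrying a dangling physical leg for each qubit $i$, one $Z$-spider $s_g$ with no dangling leg for each row $g$ of $G_X$, and one $Z$-spider $m_\ell$ carrying a dangling logical leg for each $\ell\le k$, and connect $s_g$ to $v_i$ whenever $g_i=1$ and $m_\ell$ to $v_i$ whenever $(f_\ell)_i=1$ (note no Hadamard edges are needed). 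Contracting, $v_i$ enforces $x_i=\bigoplus_{g\ni i}y_g\oplus\bigoplus_\ell(f_\ell)_i\,y_\ell$, so on the dangling legs the map is $|y_1\cdots y_k\rangle\mapsto\sum_{w\in C_1}|\,w\oplus\bigoplus_\ell y_\ell f_\ell\,\rangle$. I would then verify: the $y_\ell=0$ image is $\sum_{w\in C_1}|w\rangle$, which is $+1$-stabilized by every $X$-check manifestly and by every $Z$-check precisely because $C_1\subseteq C_2^{\perp}$, i.e.\ $H_X H_Z^{T}=0$; the logical $\bar X_\ell$ pushes to $X$ supported on $f_\ell$; and a dimension count together with these facts pins the map to the CSS encoder up to the logical-$Z$ basis, which one checks (by operator pushing through the $m_\ell$ legs, using the $Z\otimes Z$ spider symmetries, or by direct computation with $h_\ell\in C_1^{\perp}$ and $h_\ell\cdot f_{\ell'}=\delta_{\ell\ell'}$) acts as the expected logical $Z$. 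Since each $v_i,s_g,m_\ell$ is a conjoining of $3$-valent spiders by the first step, and the whole network is these conjoined along the incidence edges, the CSS encoder is a conjoining of $2$-qubit bit-flip and phase-flip repetition codes.

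The part I expect to be the main obstacle is not the construction but the careful verification that the logical-\emph{$Z$} operators come out correctly --- i.e.\ that the network is genuinely the CSS encoder and not some other isometry onto the same code subspace with a twisted logical basis --- together with the attendant bookkeeping between states and encoding maps (the Choi--Jamiolkowski duality of Sec.~\ref{sec:2}) and the handling of the degenerate low-valence spiders. A secondary, more routine point is confirming that the check-matrix conjoining operation of~\cite{QL1} reproduces both spider fusion and the bipartite gluing used here; this, however, is essentially its defining property and can be invoked rather than reproved.
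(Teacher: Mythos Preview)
Your proposal is correct, and the core ingredient---spider fusion to reduce arbitrary-valence spiders to $3$-valent ones, hence to $2$-qubit repetition codes---matches the paper. However, the paper's main argument takes a different, more operational route to the spider network: it starts from a measurement-based state-preparation circuit (ancilla per check, entangle via CNOTs, measure, post-select trivial syndrome), then simplifies the circuit tensors to obtain a network whose graph is literally the Tanner graph of the code, with one spider per $X$-check, one per $Z$-check, and one per data qubit. That network has a large kernel (redundant logical legs), and the paper only afterwards sketches the kernel-free encoder via the Choi state of the isometry, noting that this variant coincides with the Kissinger phase-free-ZX construction.

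Your construction is essentially that kernel-free variant worked out directly: you skip the $Z$-check nodes entirely, encode only the $X$-stabilizers and chosen logical-$\bar X$ representatives as $Z$-spiders, and let the $Z$-checks be satisfied automatically by the $C_1\subseteq C_2^{\perp}$ condition. This buys you an honest encoding isometry from the start, at the price of the bookkeeping you correctly flag (verifying the logical-$Z$ action and handling the Choi--Jamiolkowski duality). The paper's route buys physical transparency and avoids that verification, but pays with a kernel that must be quotiented out separately. Both are sound; yours is closer to the ZX-calculus literature the paper cites, while the paper's primary proof is the circuit-simplification picture of~\cite{QL2}.
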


\begin{proof}
We will prove this graphically using tensor contractions by simplifying a measurement-based state preparation protocol. There is a similar proof for the ZX calculus  \cite{kissinger2022phasefreezxdiagramscss}, which was brought to our attention as this manuscript was being completed. Nevertheless, it can be instructive to briefly walk through the proof from a physically intuitive perspective below \cite{QL2} . 

Suppose we are given the $X$- and $Z$-checks of a CSS code and consider any measurement-based state preparation protocol to prepare a codeword, for instance by measuring the checks and post-selecting the trivial syndrome. Physically, one can perform such measurement-based state preparation process by introducing an ancillary qubit for each check, entangling the physical qubits with the ancilla, then measuring the ancilla in the computational basis and post-selecting on the desired outcome. As the circuit elements themselves are tensors and can be contracted and simplified, the resulting tensor network can be reduced to a Tanner graph. An example is shown in Figure~\ref{fig:ZXCSSa} for $Z$-checks\footnote{See also \cite[\S{E.4}]{QL2} for detailed derivations for more general cases where the stabilizer group can be non-Abelian.}.
\begin{figure*}
    \centering
\includegraphics[width=0.7\linewidth]{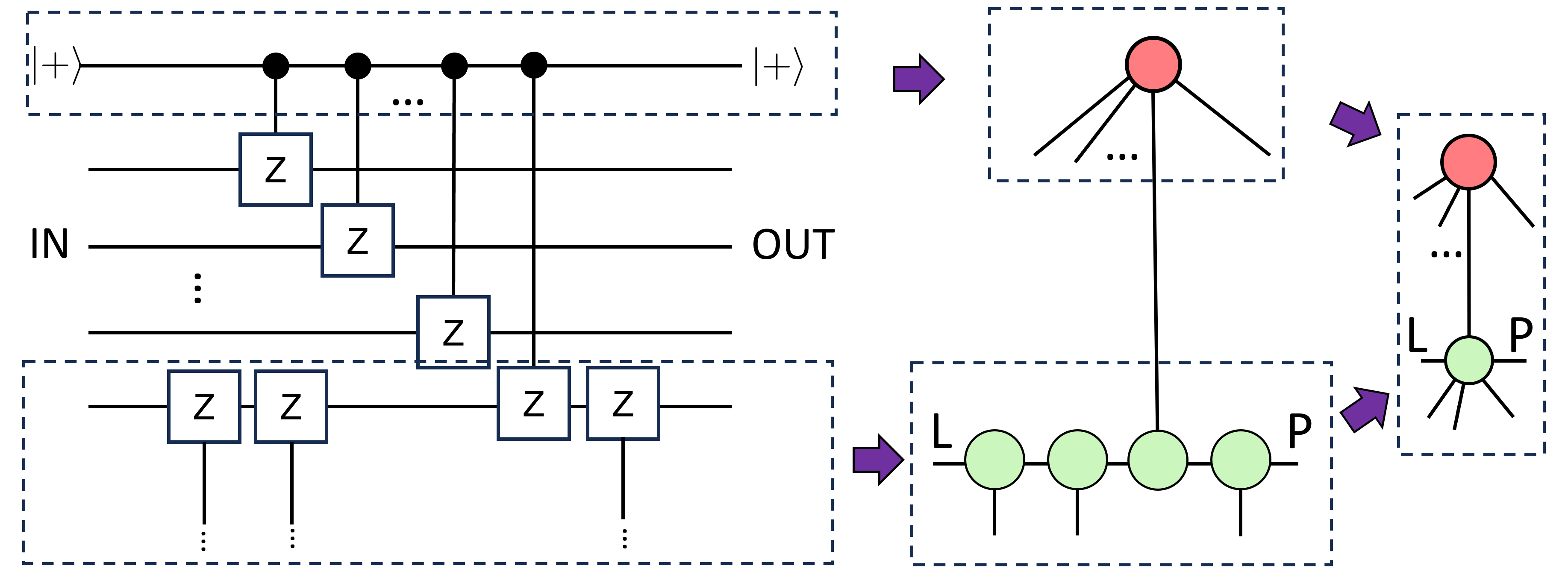}
    \caption{Entangling an ancillary qubit and then postselecting for the trivial syndrome on the $Z$-checks produce a unphased X-spider for the check node and an unphased Z-spider for the data node. L, P here denote logical/input and physical/output legs respectively.}
    \label{fig:ZXCSSa}
\end{figure*}

Without loss of generality, we will perform the $X$-checks after the $Z$-checks. The derivation is similar to Figure~\ref{fig:ZXCSSa} but exchanges the colors. Since the circuit for the $X$-check follows that of the $Z$, the resulting tensor network of the code can be read off as Figure~\ref{fig:ZXCSSb}. Since any $X$- or $Z$-spider can be built from contracting the tensors of two-qubit phase and bit-flip codes, this completes the proof.
\end{proof}

\begin{figure}
    \centering
\includegraphics[width=\linewidth]{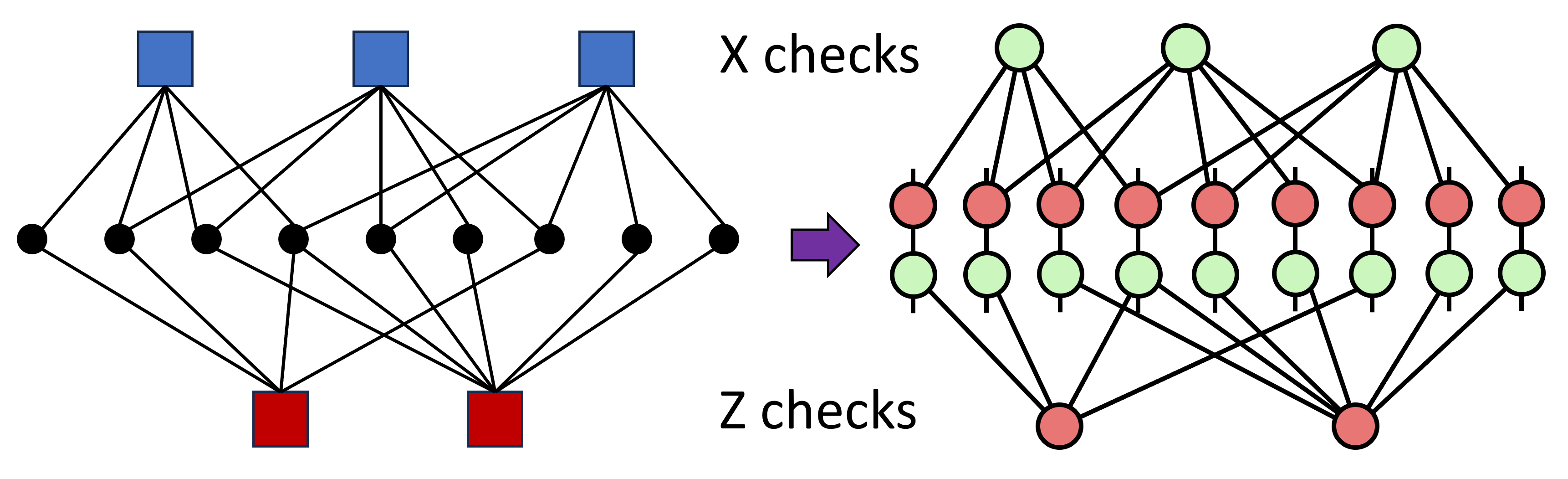}
    \caption{A CSS code can be built as a tensor network of bit-flip and phase-flip codes with an identical the Tanner graph. Here blue and red nodes label X and Z checks of the CSS code respectively.}
    \label{fig:ZXCSSb}
\end{figure}

Note that tensor networks built this way \cite{QL2} have a large kernel as not all input logical legs represent independent logical qubits, similar to \cite{Bombin_2024,ZXcalc}. Such a representation has been useful for fusion based quantum computation and for understanding fault-tolerant complexes \cite{bombin2023faulttolerantcomplexes}. 

It is also straightforward to create a network that mods out the kernel in the case where the number of input legs equals the number of logical qubits. Let $S$ be the stabilizer group that defines an $[[n,k]]$ CSS code. The Choi state of its encoding isometry is a stabilizer state over $n+k$ qubits and is stabilized by the group of elements generated by $\bar{L}_n\otimes L_k$ where $L$ is the tensor product of $X$ and identity (or $Z$ and identity) supported on the last $k$ qubits.  $\bar{L}$ is any physical representation of $L$ over $n$ qubits in the original stabilizer code. 

Since this state can be written such that its stabilizer has only $X$- and $Z$-generators, it can be prepared using our prescription above where the resulting tensor network has only physical legs. Then we perform the usual code shortening by going back to the encoding map of the code stabilized by $S$ using the channel state duality. To do this, we simply redesignate the last $k$ legs as logical inputs and the first $n$ legs as physical outputs. This latter  construction is identical to the one by \cite{kissinger2022phasefreezxdiagramscss}. 

Finally, it is clear that repetition codes can build more than CSS codes: we have already seen this by constructing CSS-like sparse subsystem codes using the algorithm of\S{\ref{subsec:algo}}. Therefore, these repetition code building blocks are sufficient for building up non-trivial sparse quantum codes, we just have to be creative enough in combining them in an efficient manner. 

\section{Discussion}\label{sec:discussion}

In this work, we answer the question whether it is possible to grow sparse quantum codes in ways similar to concatenation. We showed that some observations from \cite{Farrelly_2022,QL1} can be generalized to build sparse codes far more systematically by recognizing the power of concatenating or conjoining with ``bad codes'' that have low distances for high logical weight operators to reduce stabilizer check weights. We provided an efficient algorithm with a lower bound guarantee on the distance for building CSS-like subsystem codes. This simple algorithm can produce subsystem codes that have asymmetric distances and only require two types of lego blocks: bit-flip and phase-flip codes. Using these components, we produced new tensor network constructions for the 2d surface code, 2d compass code, 2d Bacon-Shor code and 3d Bacon-Shor code. More generally, these simple components suffice to produce asymptotically good codes as they can produce any CSS code. For nearer terms, the protocol to grow a LDPC or subsystem code can be particularly beneficial when applied to codes that maintain some form of spatial locality, for example bivariate bicycle codes \cite{Bravyi_2024}, where a family of subsystem codes of various sizes can be used to interpolate between the known construction of various dimensions. We leave the construction of such an algorithm to future work. 

We emphasize that although there are already many methods to generate quantum LDPC codes with better asymptotic parameters, a key purpose of our work here is to develop a better understanding of qLDPC codes from a ``reductionist coding-theoretic'' perspective --- we ask how sparse codes can emerge from smaller atomic codes. This is similar to Tanner codes in spirit, but distinct in construction. It is also a step to understand sparse codes from the point of view of TN/QL and operator flow.  

There are many directions that requires further exploration. This involves a more general algorithm for building qLDPC codes where the checks commute as well as identifying more specific methods and graph structures that allows us to improve the scaling of distance and rate during the iteration. In particular, it suggests that the study of minimal support sets that intersect with all minimal weight logical operators can provide valuable constraint for this method. On the computational front, more efficient methods and codes will be needed to determine the average distance scaling of our algorithm. This will also involve the development of software packages that can automate the computation of (higher genus) weight enumerators for these codes using tensor networks.

The algorithm of \S{\ref{subsec:algo}} can also be modified to improve its rate. While we have added all the logical qubits at the beginning, we need not do this in general. For instance, one can gradually add to the code additional logical qubits using the generalized isometric concatenation used in holographic codes, which introduces new logical degrees of freedom as the code and distance grows. Alternatively one can add logical qubits later during the concatenation, which have lower distance. This opens up an alternative route to construct holographic-like codes that also remain sparse at the ``boundary'' of the tensor network.

The new method of concatenation/conjoining with non-isometric codes or codes with poor distances also opens up several exciting future directions such as constructing sparse codes with (targeted) fault-tolerant gate sets and growing sparse codes that reduces spatial non-locality, for instance by using low distance but high rate triorthogonal codes \cite{BH}. It is a major challenge to design codes that have targeted gates that act within a single code block. Coupled with quantum lego's power in this direction \cite{cao_lackey_targeted}, as well as its power to generate codes with transversal non-Clifford gates, this method can help produce novel codes that are important for universal fault-tolerant logic and magic state distillation. The spatially local embedding of sparse codes is also an important direction experimentally. Although \cite{Bravyi_2010,Krishna} showed that non-locality necessarily exist when a particular distance or rate is needed, it is possible that an algorithm that relies on local growth can significantly simplify or reduce the cost associated with non-spatially-local operations \cite{li2024transformarbitrarygoodquantum}. It is also important to examine variations of this algorithm in growing codes that have more regularity in its connectivities, which is often desirable in practical applications. Furthermore, since subsystem codes are easily constructable using quantum lego, another immediate direction is to consider atomic legos with greater varieties that can lead to subsystem codes that permit fault-tolearnt code-switching, which have thus far been mostly limited to instances of 3D color codes. As tracing small atomic codes is equivalent to the fusion of photonic resource states, adaptions of the current protocol might facilitate the FT implementation of more general sparse codes in fusion-based quantum computation\cite{Litinski:2025irj}. 

In light of the recent developments in combining machine learning and quantum code design \cite{QLRL,mauron2023optimizationtensornetworkcodes,olle2024simultaneousdiscoveryquantumerror,He:2025exv}, our method for sparse code growth also provide a much needed constraint or guideline for generating qLDPC codes using these methods. For example, the conjoining by non-isometric codes can be added as a separate move in RL-based method above to maintain sparsity in generating LDPC codes. One can also use the algorithm as is for different seed codes and characterize the various codes that can be produced. In particular, we need to understand the average case scaling with regularity constraints. It will be interesting to also pin down the constant scaling factor, which is of practical interest \cite{Liang:2025qqr}.

Our construction also presents an alternative strategy that sparsifies a code. Existing methods can sparsify pre-built codes, such as those of \cite{gottesman2022,floquetify, Bacon_2006,sparsecodecircuit,hastings2016,hastings2023,Vasmer2024}. To an extent, the iterative algorithm alternates between concatenation and sparification, the former increases code distance while the latter may be achieved using other methods above. For example, \cite{hastings2016,hastings2023,Vasmer2024,He:2025exv} show that CSS codes can be sparsified by paying a little cost of reducing distance in the asymptotic limit while \cite{sparsecodecircuit} can convert an existing fault-tolerant encoding circuit of a code into a subsystem code that is sparse. From a very different perspective, one can also reduce check weights via ``floquetification'' where dynamical measurements of non-commuting checks can be used to protect quantum information in a way different from a traditional static code subspace. It is not yet clear how the performance these various method compare with each other in a finite near-term regime as many guarantees are asymptotic. For dynamical codes, it is also challenging to characterize its FT properties. Therefore, it can be beneficial to explore whether combinations of such methods can lead to better design of sparse codes in the future.

\section*{Acknowledgement}
We thank Giuseppe Cotardo, Balint Pato, Hengyun (Harry) Zhou for comments and discussions. 
\appendix

\bibliography{ref}
\bibliographystyle{unsrt}
\appendix

\section{Proof of Theorem 1}\label{app:a}
To prove the theorem, we first set up a few lemmas. We prove the results for increasing $X$-distance; the proof for increasing $Z$-distance is analogous, by exchanging $X$ and $Z$ in all arguments. 

It is obvious that code concatenation can increase $X$-distance. Formally, we prove the following.

\begin{lemma}
Concatenation by bit-flip repetition code on $\mathrm{supp}(\bar{Z}^{(j)}_0)$ increases the weight of all $X$-type logical operators that act non-trivially on the $j$-th logical qubit by at least $1$.
\end{lemma}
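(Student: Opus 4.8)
The plan is to make precise what ``concatenation by a bit-flip repetition code'' does to operators at the level of supports, and then track how an $X$-type logical operator must change under this move. Recall the setup: for each qubit $q$ in $\mathrm{supp}(\bar{Z}_0^{(j)})$ we glue a $3$-valent $Z$-spider, which we view as the encoding tensor of a $2$-qubit bit-flip code: one leg is identified with the old physical qubit $q$, and a fresh partner qubit $q'$ is introduced. The relevant symmetries of this spider are the $Z$-spider stabilizers/pushing rules: a $Z$ on the old leg pushes to $Z\otimes I$ on the pair (or equivalently $I\otimes Z$, since $ZZ$ on the pair is a symmetry), so $Z$-type weight is unchanged; an $X$ on the old leg, however, must push to $X\otimes X$ on the pair, since $XI$ alone is not a symmetry of the spider while $XX$ is. The same holds for any operator with an $X$ (or $Y$) component on leg $q$: its restriction to the new pair is forced to contain an $X$ on both $q$ and $q'$.

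First I would record this as the operative fact: if $\bar{L}$ is any operator acting on the original code and $q\in\mathrm{supp}(\bar{Z}_0^{(j)})$ is such that $\bar{L}$ has $X$-content on $q$ (i.e. the $q$-th tensor factor of $\bar{L}$ is $X$ or $Y$), then after concatenation any representative $\bar{L}'$ of the image operator satisfies $q,q'\in\mathrm{supp}(\bar{L}')$, so $\mathrm{wt}(\bar{L}')\ge \mathrm{wt}(\bar{L})+1$. Second, I would argue that an $X$-type logical operator $\bar{L}_X$ acting nontrivially on the $j$-th logical qubit is \emph{forced} to have $X$-content on at least one qubit of $\mathrm{supp}(\bar{Z}_0^{(j)})$. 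This is the commutation/anticommutation bookkeeping: $\bar{L}_X$ acting nontrivially on logical qubit $j$ means it anticommutes with the bare logical $\bar{Z}_0^{(j)}$; since $\bar{L}_X$ is purely $X$-type and $\bar{Z}_0^{(j)}$ is purely $Z$-type, anticommutation forces $|\mathrm{supp}(\bar{L}_X)\cap\mathrm{supp}(\bar{Z}_0^{(j)})|$ to be odd, hence nonzero, and on any such overlap qubit $\bar{L}_X$ has $X$-content. Combining the two facts gives $\mathrm{wt}(\bar{L}_X')\ge \mathrm{wt}(\bar{L}_X)+1$.

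Then I would need to close the gap between ``a fixed representative grows'' and ``the logical operator grows'', i.e. handle the fact that the new code has more stabilizers/gauge operators and one could in principle multiply by them to shrink the support again. Here I would note that the only new generators introduced at this step are the weight-$2$ $X$-type checks $XX$ on the pairs $(q,q')$ — there are no new $Z$-type generators, and the old stabilizers/logicals of $Z$-type are unchanged — and I would check that multiplying $\bar{L}_X'$ by any product of these new $XX$ generators cannot reduce its weight below $\mathrm{wt}(\bar{L}_X)+1$. Concretely, for each affected pair the restriction of $\bar{L}_X'$ is either $II$, $XX$ (on pairs where $q$ carried $X$-content), or possibly $XI$/$IX$ on pairs where $q\notin\mathrm{supp}(\bar{L}_X)$ but $q$ was still concatenated; multiplying by $XX$ swaps $XX\leftrightarrow II$ and $XI\leftrightarrow IX$, so the weight contribution from each pair stays in $\{0,2\}$ or stays $1$, and it is never possible to simultaneously clear all of an odd overlap. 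A short parity argument — the image still anticommutes with $\bar{Z}_0^{(j)}$ in the new code, so its overlap with $\mathrm{supp}(\bar{Z}_0^{(j)})$ (now embedded among the old qubits) remains odd after any such multiplication — pins down that at least one pair keeps a nonzero, and in fact weight-$2$ where it was weight-$1$, contribution.

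The main obstacle is exactly this last point: ruling out that the new weight-$2$ $X$-checks (and, in the subsystem setting, any newly created gauge freedom) can be used to undo the weight increase. The pushing/symmetry argument shows a \emph{particular} representative grows, but ``distance'' quantifies over all representatives, so the real content of the lemma is the parity/covering argument showing the growth is robust to multiplication by the new generators. I would phrase it cleanly by working with the coset: $\bar{L}_X$ and $\bar{Z}_0^{(j)}$ still anticommute in the enlarged code, and since the enlargement only adds $X$-type generators that act as $XX$ on the fresh pairs, their action on the subspace of qubits probed by $\bar{Z}_0^{(j)}$ cannot flip that anticommutation — so any representative of the logical class must still have odd (hence $\ge 1$) overlap there, and by the pushing rule each such overlap point contributes an extra fresh qubit to the support. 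That yields the claimed $+1$ and finishes the lemma.
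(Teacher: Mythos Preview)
Your core argument matches the paper's exactly: any $X$-type logical operator acting nontrivially on logical qubit $j$ must anticommute with the bare $\bar{Z}_0^{(j)}$, hence has odd (so $\geq 1$) overlap with $\mathrm{supp}(\bar{Z}_0^{(j)})$, and at each such site the bit-flip encoding pushes $X\mapsto XX$, adding at least one to the weight. The paper stops there --- the lemma is phrased at the level of individual representatives (``all $X$-type logical operators''), so no coset analysis is needed.

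Your additional paragraph, checking that multiplication by newly introduced generators cannot undo the increase, is good instinct but contains a factual slip: concatenating with the bit-flip code (the $3$-valent $Z$-spider) introduces new weight-$2$ \emph{$Z$-type} stabilizers $ZZ$ on each pair $(q,q')$, not $XX$ checks. You have this reversed (the paper states this explicitly just after the lemma, and it is consistent with your own pushing rules: the spider symmetry that acts trivially on the input leg is $ZZ$ on the outputs, while $XX$ is the image of a single input $X$). With the correct identification the argument you are after actually becomes shorter: there are no new $X$-type generators at this step, so the $X$-type operators of the new code are precisely the images of those of the old code under the encoding map, and every representative in the logical class has its weight increased by its overlap with $\mathrm{supp}(\bar{Z}_0^{(j)})$, which is always $\geq 1$ by anticommutation. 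Your $XX\leftrightarrow II$ swap analysis is therefore addressing a degree of freedom that does not exist here; the real reason nothing can undo the $+1$ is that the $X$-sector embeds rigidly under bit-flip concatenation.
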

\begin{proof}
    Note that all $X$-type (dressed) logical operators that have support on the $j$-th logical qubit must anticommute with $\bar{Z}^{(j)}_0$, since it is a bare logical operator. This means that each of these $X$-type operators must intersect with $\mathrm{supp}(\bar{Z}^{(j)}_0)$ at least $1$ site. As this site is expanded into two by concatenating with bit-flip codes, we add one to the $X$-weight for all such $X$-operators\footnote{For example, suppose $k=3$ and $\bar{Z}^{(1)}_0=\bar{Z}\bar{I}\bar{I}$ acts non-trivially on the first logical qubit (and trivially on all other logical and gauge qubits), then $\bar{X}\bar{X}^{a}\bar{X}^b$ all have their distances increased for $a,b=0,1$.}. It is possible that some operators have their weights increased by more than $1$ if their support overlap with $\mathrm{supp}(\bar{Z}^{(j)}_0)$ at more than $1$ site.
\end{proof}

Note that the $X$-type logical operators for other logical qubits may not have their weights increased as they commute with $\bar{Z}_0$ they might not overlap with the support of $\bar{Z}_0$. Hence, the distances of the other logical qubits need not increase.

Additionally, concatenation will increase the check weights of all $g_X$ that intersect with $\bar{Z}^{(j)}_0$. It also introduces new weight-$2$ $Z$-checks between any $q\in \mathrm{supp}(\bar{Z}^{(j)}_0)$ and its newly added partner $q'$. This adds one to the $Z$-qubit degree at each $q$ but does not violate the $Z$-weight limit $w_Z$. Nor does it alter any $Z$ distances --- it only adds equivalent representations.

\begin{table*}[]
    \centering
    \begin{tabular}{|c|c|c|c|}
    \hline
        initial & after concatenation & after XN & after shift\\
        \hline
        $w_X$ &  $w_X'\leq w_X+|\mathrm{supp}(g_X)\cap \mathrm{supp}(\bar{Z}_0^{(j)})|$ & $w_X''\leq w_X $& unchanged\\
        \hline
        $q_X$ & $q_X'\leq q_X$ & $q_X''\leq q_X$ & unchanged\\
        \hline
        $d_X^{(\ell)}$ & $d_X'^{(\ell=j)}\geq d+1, d_X'^{(\ell\ne j)}\geq d$ & unchanged & unchanged\\
        \hline
        $w_Z$ & $w_Z'\leq w_Z$ & unchanged & unchanged\\
        \hline
        $q_Z$ & $q_Z'\leq q_Z+1$ & $q''_Z\leq q_Z+1$ & $q''_Z\leq q_Z$\\
        \hline
        $d_Z^{(\ell)}$ & unchanged & unchanged & unchanged\\
        \hline
    \end{tabular}
    \caption{How the various weights are transformed by steps 2 through 4 of the algorithm for the $j$-th logical qubit. $w_X,w_Z,q_X,q_Z$ denote the maximum degrees of various nodes in the Tanner graph defined in the algorithm while $d_{P}^{(\ell)}$ where $P=X$ or $Z$ is the X or Z distance of the $\ell$th logical qubit. If the bound is unchanged compare to the previous step, then it is denoted as ``unchanged'' in the entry.}
    \label{tab:transformed_wts}
\end{table*}

How the weights change throughout the algorithm is summarized in Table~\ref{tab:transformed_wts}. 

\begin{lemma}\label{lemma:evenoverlap}
    Let $g_X$ be any $X$-type gauge generator. Then $|\mathrm{supp}(g_X)\cap \mathrm{supp}(\bar{Z}_0^{(j)})| = 0 \pmod{2}$, \textit{i.e.} they overlap at even number of sites.
\end{lemma}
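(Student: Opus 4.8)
The plan is to exploit the CSS structure together with the defining property that $\bar{Z}_0^{(j)}$ is a \emph{bare} logical operator. First I would recall that since the code is CSS-like, $g_X$ is a pure $X$-type operator and $\bar{Z}_0^{(j)}$ is a pure $Z$-type operator, so their symplectic inner product reduces to the mod-$2$ count of the sites where both act nontrivially, that is, $|\mathrm{supp}(g_X)\cap\mathrm{supp}(\bar{Z}_0^{(j)})|\bmod 2$. The claim is therefore equivalent to the statement that $g_X$ and $\bar{Z}_0^{(j)}$ commute.

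The key step is then to argue that every $X$-type gauge generator commutes with every bare logical-$Z$ operator. This is almost by definition: a bare logical operator lies in the centralizer of the gauge group $\mathcal{G}$ (equivalently, in $N(\mathcal{S})$ but chosen as a representative that commutes with all of $\mathcal{G}$), so in particular it commutes with each gauge generator $g_X$. I would spell this out: $\bar{Z}_0^{(j)}\in C(\mathcal{G})$ by the choice made in step 1 of the algorithm (we pick bare representatives), hence $[\,\bar{Z}_0^{(j)},g_X\,]=0$; translating the commutator of a pure $Z$ and a pure $X$ Pauli string into the overlap count gives $|\mathrm{supp}(g_X)\cap\mathrm{supp}(\bar{Z}_0^{(j)})|\equiv 0\pmod 2$.

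I do not expect any real obstacle here; this is a short structural lemma. The only thing to be careful about is making sure the notion of "bare" is used correctly — a \emph{dressed} logical operator need only commute with $\mathcal{S}$, not with all of $\mathcal{G}$, and for such operators the even-overlap conclusion can fail. So the one subtlety is to note explicitly that the lemma uses the bare representative $\bar{Z}_0^{(j)}\in\mathcal{L}_Z$ fixed at initialization, for which commutation with all $X$-type gauge generators is guaranteed. With that in place, the parity statement is immediate.
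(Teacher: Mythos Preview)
Your proposal is correct and matches the paper's own proof essentially verbatim: the paper argues that $g_X$ commutes with $\bar{Z}_0^{(j)}$ because gauge generators act only on the gauge qubits (equivalently, bare logicals lie in the centralizer of $\mathcal{G}$), and then uses the CSS structure to translate commutation into even overlap. Your cautionary remark about dressed versus bare logicals is also apt and aligns with how the paper uses the lemma.
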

\begin{proof}
    Each gauge generator $g_X$ must commute with the bare logical $\bar{Z}_0^{(j)}$, as they only act non-trivially on gauge qubits, not logical qubits. This requires that these operators to have even overlapping support.
\end{proof}
In fact, any logical $X$ operator that acts on other logical qubits must also have overlapping support with $\bar{Z}_0^{(j)}$ being even.

Now, the critical step for this procedure is to identify for each gauge generator pairs of qubits that were added during concatenation. If $g_X$ did not overlap with $\bar{Z}_0^{(j)}$ we leave it alone.

\begin{lemma}
    If for each $g_X$ whose weight get increased by nonzero amount, applying $XN$ to enough pairs for each $g_X$ produce $X$-checks that are below the sparsity limit. The resulting $X$-distance for this logical qubit is at least $d+1$ after the procedure. The procedure does not alter the $Z$-distance or $Z$-check weights.
\end{lemma}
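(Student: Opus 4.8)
The plan is to show that the three effects claimed in the lemma follow by bookkeeping on operator supports, using the transformation rules already tabulated. First I would verify the $X$-distance claim. By the previous lemma (Lemma~1), concatenating by bit-flip repetition codes on $\mathrm{supp}(\bar{Z}_0^{(j)})$ already increases the weight of every $X$-type dressed logical operator acting nontrivially on the $j$-th logical qubit by at least $1$. So I only need to argue that the subsequent $XN$ contractions do not undo this gain. Here the key observation from Table~\ref{tab:ZNtransformation} (with $X\leftrightarrow Z$) is that an $X$-type string pushed through $XN$ maps to an identical $X$-type string on the outputs without activating the gauge qubit --- in particular $XN$ never decreases the weight of an $X$-type operator that it is applied to, and it acts as the identity on legs outside its support. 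Hence after all $XN$ contractions every $X$-type logical operator on the $j$-th qubit still has weight $\geq d+1$, giving $d_X^{(j)} \geq d+1$.

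Next I would handle the check-weight claim: after $XN$ every $X$-gauge generator has weight $\leq w_X$. By Lemma~\ref{lemma:evenoverlap}, each $g_X$ overlaps $\mathrm{supp}(\bar{Z}_0^{(j)})$ in an even number of sites, so concatenation expands $g_X$ by an even number of newly added qubits; these newly added qubits all lie in the freshly appended block $h_X$, and the $XN$ move applied to pairs among them peels off a weight-$2$ $X$-generator per pair while reducing the weight of $g_X$ by $2$ each time (as $XX$ on the contracted pair pushes to $XX$ on the outputs, but the split produces a separate weight-2 check). One keeps applying $XN$ to pairs until the weight drops below $w_X$; since the weight was even-incremented from a value $\leq w_X$ and the target $w_X\geq 2$, finitely many applications suffice. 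The new weight-$2$ $X$-checks introduced are themselves $\leq w_X$ since $w_X\geq 2$. So all $X$-checks respect the sparsity limit after the procedure.

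Finally I would check the $Z$-side invariance. The concatenation step introduces only new weight-$2$ $ZZ$ generators between each $q\in\mathrm{supp}(\bar{Z}_0^{(j)})$ and its fresh partner $q'$, which are $\leq w_Z$ (as $w_Z\geq 2$) and which are equivalent representations, hence do not change $d_Z$. The $XN$ contraction, by the row labelled ``$g_Z$ and $|B|=2$'' etc.\ in Table~\ref{tab:ZNtransformation} with $X\leftrightarrow Z$, converts only weight-$2$ $Z$-stabilizers on the contracted pair into gauge operators and otherwise leaves $Z$-type operators supported outside the contracted legs untouched; in particular no $Z$-type logical operator has its weight reduced, so $d_Z$ is unchanged, and no $Z$-check weight is increased. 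The main obstacle I anticipate is being careful that the $XN$ contractions are applied to pairs of qubits that are genuinely both in the newly-added block and that together exhaust the even overlap of each overweight $g_X$ --- i.e.\ that the pairing is consistent across all gauge generators simultaneously, so that a single pair-contraction does the right thing for every $g_X$ that contains both members of the pair; this is exactly the ``critical step'' flagged just before the lemma, and it relies on the fact that any $X$-type operator (logical or gauge) overlaps $\mathrm{supp}(\bar{Z}_0^{(j)})$ evenly, so the pairing is well-defined.
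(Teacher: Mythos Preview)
Your $X$-distance argument has a genuine gap. You claim that ``$XN$ never decreases the weight of an $X$-type operator'' because the non-active-gauge column of Table~\ref{tab:ZNtransformation} pushes $X$-strings to identical $X$-strings. But distance is the minimum weight over the whole gauge-equivalence class, and the $XN$ step \emph{enlarges} that class: it introduces new weight-$2$ $X$-type gauge generators on the newly added qubits (equivalently, one can activate the $XN$ gauge qubit). Multiplying a dressed $X$-logical by such a generator removes two newly added qubits from its support, so the weight of a given logical representative \emph{can} drop after the $XN$ step. Your bookkeeping only tracks one representative and so does not control the minimum.

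The paper closes this gap with a parity argument you are missing: a dressed $X$-logical acting nontrivially on the $j$-th logical qubit anticommutes with $\bar{Z}_0^{(j)}$ and therefore overlaps $\mathrm{supp}(\bar{Z}_0^{(j)})$ in an \emph{odd} number of sites; concatenation thus adds an odd number of new qubits to its support. Since every new $X$-gauge generator is supported on a \emph{pair} of newly added qubits, multiplying by any product of them can only cancel the new qubits in pairs, leaving at least one behind --- hence the weight remains at least $d+1$. A secondary point you also omit, but which the paper checks explicitly, is that no logical $X$ gets converted into a pure gauge operator: because both the active and non-active pushes are available for $X$-type inputs, the $XN$ gauge leg is decorrelated from the existing logical legs. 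Your treatment of the $X$-check weights and of the $Z$-side is essentially fine.
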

\begin{proof}
    By Lemma \ref{lemma:evenoverlap} for each $g_X$ we can identify pairs of qubits that have just been added through concatenation. Namely, because the overlapping support of $g_X$ and $\bar{Z}_0^{(j)}$ is even, for each qubit $q$ added in the support we have also added a partner $q'$. Applying $XN$ to distinct pairs of added qubits we can clearly  reduce their check weights back to the original value while introducing new $X$-type gauge generators of weight $2$. These new $X$-gauge generators only have support over the newly added qubits and therefore do not add to the $X$-degree in the original code. As they are weight 2, they have bounded weight. 
    
    Since the original Tanner graph has bounded degree, each qubit $q$ is checked by a constant number of $X$- and $Z$-checks ($q_X,q_Z$). This means that for any newly added qubit $q'$ that came out of concatenating this qubit, both will be checked by $q_X$ number of $X$-checks after concatenation. An $XN$ reduction step above will split $q,q'$ such that they will now be checked by two types of checks: $q$ by the checks of the original code and $q'$ by the newly added weight-$2$ $X$-checks. However, since $q'$ is contained in at most $q_X$ distinct $g_X$'s, there will be at most $q_X$ weight-2 $X$-checks acting on it. Hence its $X$-degree is bounded by the same constant for sparsity. Another way of saying this is that applying $XN$ in the procedure does not exceed the $X$-type qubit degree for each newly added qubit (Figure~\ref{fig:wt_reduce_pf}).

    Let us examine how this alters the distance on the logical qubit of interest. All logical $X$ operators had their weights increased by an odd number, i.e. the number of intersections with $\bar{Z}_0^{(j)}$. The $XN$ weight reduction will introduce weight-$2$ gauge operators. The worst that can happen is that there is a gauge operator (not generator) for each pair of newly added qubits during concatenation. Then when one removes these, we can pair up these qubits except one. Hence all logical $X$-operators here will have their distances increased by at least $1$ after this because the number of newly added qubits is odd. For example, a logical $X$-operator supported on the qubits circled in orange in Figure~\ref{fig:wt_reduce_pf}, concatenation increased its weight by 3, but $XN$ will decrease it. However, because $|\{q'\}|$ is odd, there is no perfect pairing and hence no gauge operator multiplication that can completely reduce the support of the new logical operator to what it was before. 

    For other logical $X$-operators that do not act on this logical qubit, their weights all increase by an even number, so the worst case is that the newly added weights are cancelled through gauge operator multiplication and their distance do not increase. However, their distance also cannot be less than $d$ because $XN$ only reduce the weight by acting on the newly added qubits. 

    It remains to check that the non-trivial logical operators did not get converted into pure gauge operators. This can happen when the logical leg is correlated with the gauge leg we add in $XN$, i.e., any logical operator pushed through the tensor network should not always activate the gauge qubit on $XN$. Fortunately, this does not happen as any overlap can be passed with or without activating the gauge qubit, see table~\ref{tab:ZNtransformation} (but exchange $X$ and $Z$). Note that both options of active and non-active gauge qubits are permitted for $\bar{X}$. This decouples the gauge leg/qubit with other existing logical or gauge legs in the tensor network. 

    For $Z$-operators the situation is slightly different. Again going to this Table, we note that certain modes of operator pushing are not permitted without an active gauge qubit. In particular, two $Z$s are passed to two $Z$s without activating the gauge qubit whereas any single $Z$ pushing must activate the gauge qubit. This means that any $Z$-type operator with $|A|=1$ are automatically correlated with the gauge degree of freedom of the $XN$ tensor. However, since $XN$ acts only on the qubits that are newly added, the only  such operators are the weight-2 $ZZ$ stabilizer checks and any operator that one can obtain via multiplication by such stabilizers. Therefore this process converts $ZZ$ stabilizer checks into weight-2 $ZZ$ gauge checks. But because these stabilizers commute with all gauge and logical operators, they are necessarily decorrelated with the existing gauge and logical operators supported on the original $n$ qubits, and their conversion to gauge operators does not impact the $Z$-distance. Therefore, repeated applications of $XN$ can correlate the gauge qubits on different $XN$s (particularly in the case where a single newly added qubit is used in different checks, then the $Z$-gauge qubits from $q$ different $XN$s will be correlated). However, it will not correlate a logical degree of freedom from the original tensor network with the newly added gauge qubits on $XN$s.
\end{proof}

\begin{figure}
    \centering
    \includegraphics[width=0.95\linewidth]{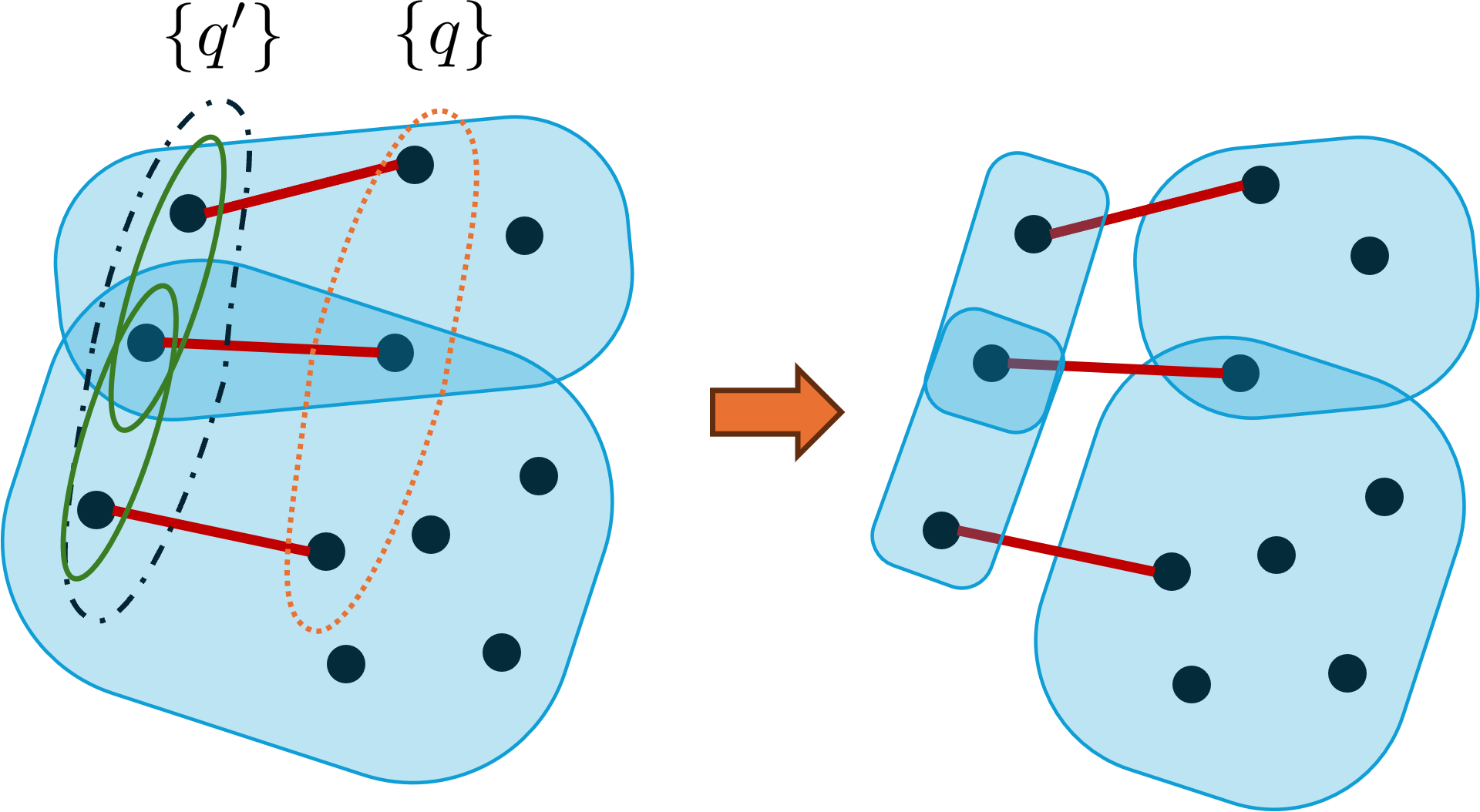}
    \caption{Left: The support of X checks (blue) have been expanded to $\{q'\}$ by concatenating qubits $\{q\}$ with bit-flip codes. Two XNs are applied to qubits in the green circles and their weights reduced (right). The weight two ZZ stabilizers on the left is converted to gauge operators on the right.}
    \label{fig:wt_reduce_pf}
\end{figure}

\begin{lemma}
    The degree of a node that corresponds to a physical qubit in the Tanner graph can be reduced to $q_Z''\leq q_Z$ by shifting at least one of the $Z$-checks supported on $q$ to $q'$ in step (4) of the iterative algorithm. 
\end{lemma}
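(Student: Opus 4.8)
The plan is to track exactly how the $Z$-qubit degree of each physical node changes through steps (2)--(4) of the algorithm and argue that step (4) always restores the bound $q_Z$. First I would recall the setup: in step (2) we concatenate each $q \in \mathrm{supp}(\bar{Z}_0^{(j)})$ with a $2$-qubit bit-flip repetition code, introducing a fresh partner qubit $q'$. The only new $Z$-type check created is the weight-$2$ generator $Z_q Z_{q'}$ on the edge $(q,q')$; all of the original $Z$-checks that acted on $q$ now continue to act on $q$ (not on $q'$). Hence after step (2), if $q$ had $Z$-degree $\deg_Z(q) \le q_Z$ before, it now has $Z$-degree at most $q_Z + 1$ (the original checks plus the one new $Z_qZ_{q'}$ edge), while $q'$ has $Z$-degree exactly $1$. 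Step (3), the $XN$ trace, creates only weight-$2$ $X$-type gauge generators on pairs of the new qubits $\{q'\}$ and converts some weight-$2$ $Z$ stabilizers into gauge operators; by the earlier lemma it does not touch $Z$-degrees, so the bound $q_Z+1$ on $q$ persists.

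The key move is step (4): for each $q$ whose $Z$-degree is now $q_Z+1$, pick one of the original $Z$-checks $g_Z$ supported on $q$ and multiply it by the new generator $Z_qZ_{q'}$. Since both $g_Z$ and $Z_qZ_{q'}$ contain $Z_q$, the product $g_Z \cdot Z_qZ_{q'}$ no longer acts on $q$; instead it acts on $q'$ (plus $g_Z$'s original support minus $q$). This replaces $g_Z$ in the generating set by an equivalent representative. The effect is to move one unit of $Z$-degree from $q$ to $q'$: now $q$ has $Z$-degree at most $q_Z$, and $q'$ has $Z$-degree at most $2$ — namely the generator $Z_qZ_{q'}$ and the shifted check. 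Since $w_Z, q_Z \ge 2$ by the standing assumption in the algorithm, both the weight of the shifted check (which changed by $1$: $q$ removed, $q'$ added, so weight unchanged, hence still $\le w_Z$) and the new degrees stay within the sparsity budget. One must also note this is a legitimate operation on a subsystem code: multiplying a gauge generator by another gauge generator keeps the gauge group $\mathcal{G}$, hence the stabilizer $Z(\mathcal{G})$, the logical operators, and therefore the code itself unchanged — only the choice of generating set moves.

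The main obstacle I anticipate is a bookkeeping subtlety rather than a deep one: one must verify that after shifting, $q'$ itself does not overflow. In a single iteration $q'$ is brand new, so it carries only the generator $Z_qZ_{q'}$ plus at most one shifted check, giving degree $\le 2 \le q_Z$. But across the later iterations of steps (2)--(6) — in particular when a subsequent concatenation is performed and more $Z$-checks accumulate on a qubit that was previously a $q'$ — one needs the invariant that the shift step is reapplied whenever the count reaches $q_Z+1$, so the bound $q_Z$ is maintained inductively throughout. I would phrase the lemma's conclusion as an invariant preserved by each full pass: \emph{after step (4), every physical node has $Z$-degree at most $q_Z$}, and observe that concatenation raises any node's $Z$-degree by at most one per pass (only the single new generator $Z_qZ_{q'}$ touches $q$), so a single shift per overweight node always suffices. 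The symmetric statement for $X$-degree under the $Z$-distance-growing pass follows by exchanging $X \leftrightarrow Z$.
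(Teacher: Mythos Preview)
Your proposal is correct and follows essentially the same approach as the paper: both arguments observe that concatenation raises the $Z$-degree of each $q\in\mathrm{supp}(\bar Z_0^{(j)})$ by exactly one (via the new $Z_qZ_{q'}$) while the fresh partner $q'$ has $Z$-degree one, and then use multiplication by this weight-$2$ generator to move a single $Z$-check from $q$ to $q'$, restoring $\deg_Z(q)\le q_Z$ and giving $\deg_Z(q')\le 2\le q_Z$. Your write-up is in fact more explicit than the paper's on the bookkeeping (e.g.\ that the shifted check's weight is unchanged, that $XN$ does not touch $Z$-degrees, and that the operation only changes the generating set of $\mathcal{G}$), but the underlying idea is identical.
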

\begin{proof}
    For each $q\in \mathcal{L}_Z$, it has one more than the degree cap $q_Z$. However, we can use these new weight-$2$ gauge operators to ``shift'' some of the $Z$-checks to $q'$s via multiplication. In other words, we slightly redefine the $Z$-checks by multiplying some of them on $q$ with the weight-$2$ checks between $q$ and $q'$. Note that the newly added $q'$ have $Z$-check degree $1$ since these weight-$2$ $Z$-checks are the only ones supported there. For each $q$, we shift one of the original $Z$-checks to the new site, then all qubits now have check degree $\leq q_Z$. One can also shift more, say $\floor{q_Z/2}$ then this reduces the degree as long as $q_Z\geq 2$.
\end{proof}

As we only redefined the checks, there is no impact on the code distance. So we have finally increased $X$-distance for one logical qubit by at least $1$ while preserving the other minimal distances and sparsity condition. 

Repeating this for all $k$ logical operators, we now have increased the $X$-distance by $1$ and have done nothing to the $Z$-distance. One can then exchange the roles of $X$ and $Z$ to increase the $Z$-distance. This completes the proof of Theorem~\ref{thm:1}.

One important factor to note is that to increase the $Z$-distance, we now have to find the support of some $\bar{X}_0$, which has been altered compared to the original seed code. Fortunately, this is efficiently tractable. 

First, we identify a bare $X$-logical after steps 1-4. One representative is the operator we obtained just after concatenation, when $\bar{X}_0$ is increased by the amount $c=|\mathrm{supp}(\bar{X}_0))\cap \mathrm{supp}(\bar{Z}_0^{(j)})|$. This $\bar{X}_0'$ is a bare logical $X$-operator because any pairs of qubits that pushed through $XN$ do not activate the gauge qubit. Neither does any single qubit flow through $XN$. 

However, the intersection of support $c'=|\mathrm{supp}(\bar{X}_0'))\cap \mathrm{supp}(\bar{Z}_0^{(j)})|=c$ is unchanged because we keep the original $\bar{Z}_0^{(j)}$, which remained a bare $Z$-logical and they will only intersect on the original qubits. The newly added qubits have no effect on this intersection. This repeats for all $k$ pairs of bare logical operators in step 5. Now setting $\bar{X}_0'\rightarrow \bar{X}_0$ we can repeat the same argument but for the phase-flip concatenation steps in step 6. 

\begin{lemma}\label{lemma:baredist}
    At each iteration 2-5, the bare logical distance is increased by at most $c$ where $c\geq 1$ is the number of intersections between $\bar{X}_0$ and $\bar{Z}_0$. 
\end{lemma}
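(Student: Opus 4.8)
The plan is to prove the bound by following the single designated representative of the bare logical operator through the three elementary moves that constitute one iteration, and showing that only the concatenation move changes its weight, and by exactly $c$. Fix the bare $X$-logical $\bar{X}_0=\bar{X}_0^{(j)}$ and the bare $Z$-logical $\bar{Z}_0=\bar{Z}_0^{(j)}$ of the logical qubit being processed, and set $c=|\mathrm{supp}(\bar{X}_0)\cap\mathrm{supp}(\bar{Z}_0)|$. Since $\bar{X}_0$ and $\bar{Z}_0$ form a conjugate pair for the $j$-th logical qubit they anticommute, so $c$ is odd and in particular $c\geq 1$, which settles the side remark.

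First I would handle concatenation (step 2). Attaching a $2$-qubit bit-flip code to a physical qubit $q\in\mathrm{supp}(\bar{Z}_0)$ sends a physical $X$ on $q$ to the code's logical $\bar{X}=XX$ on the two new qubits, and leaves every qubit outside $\mathrm{supp}(\bar{Z}_0)$ alone. The representative $\bar{X}_0$ carries an $X$ on precisely the $c$ qubits of $\mathrm{supp}(\bar{X}_0)\cap\mathrm{supp}(\bar{Z}_0)$, so each contributes exactly one extra tensor factor and $\mathrm{wt}(\bar{X}_0')=\mathrm{wt}(\bar{X}_0)+c$. Combined with the concatenation lemma above (which shows the weight strictly rises), this already pins the per-iteration change of the bare distance into the interval $[1,c]$.

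Next I would check that neither remaining move raises the weight of the tracked representative. For the non-isometric trace (step 3), $XN$ is applied only to pairs of newly added qubits, and the operator-pushing rules of Table~\ref{tab:ZNtransformation} with $X$ and $Z$ exchanged say that on the input legs an $X$-type pair $XX$ maps to $XX$ or to $II$, while $XI$ or $IX$ maps to itself; in all cases the gauge leg need not be activated, so $\bar{X}_0'$ stays bare and its weight can only stay equal or drop. The shifting move (step 4) only redefines $Z$-type checks using weight-$2$ $Z$-gauge generators, so it leaves the $X$-support untouched. Hence after the full iteration the tracked representative has weight at most $\mathrm{wt}(\bar{X}_0)+c$; since the bare logical distance never exceeds the weight of a bare representative (and equals it when $\bar{X}_0$ is taken minimal at the start of the iteration), the bare logical distance can have risen by at most $c$.

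The step I expect to be the main obstacle is the bookkeeping once steps 2--4 are repeated over all $k$ logical qubits (step 5): concatenating on $\mathrm{supp}(\bar{Z}_0^{(j')})$ for $j'\neq j$ may still enlarge $\mathrm{supp}(\bar{X}_0^{(j)})$, because $\bar{X}_0^{(j)}$ and $\bar{Z}_0^{(j')}$ commute but need not be supported on disjoint qubits. I would either state the bound per logical qubit, or absorb the cross-terms explicitly: the total increase of $\bar{X}_0^{(j)}$ over a full pass is then $\sum_{j'}|\mathrm{supp}(\bar{X}_0^{(j)})\cap\mathrm{supp}(\bar{Z}_0^{(j')})|$, which the algorithm can track as efficiently as it tracks $c$ and which collapses to $c$ when the representatives are chosen so that distinct logicals overlap only with their own conjugate partner. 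A minor additional point is to confirm the declared post-step-3 operator is genuinely bare, which is exactly the observation already made above that no single $X$ and no paired $XX$ is forced to activate the gauge qubit of $XN$.
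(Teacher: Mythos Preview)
Your proposal is correct and takes essentially the same approach as the paper: track a single bare representative through concatenation (which adds exactly $c$ to its weight), then verify that the $XN$ and shift moves keep it bare without raising its weight, so the minimum over bare representatives rises by at most $c$. The step-5 cross-term issue you flag is real and the paper handles it just as loosely---it simply asserts the argument ``repeats for all $k$ pairs'' and only in the proof of Theorem~2 absorbs the cross-terms by treating $c$ as a uniform bound on all bare-logical overlaps, which is your second suggested resolution.
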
 

Furthermore, the sites of these bare logical operators are exactly and efficiently tractable because we can keep track of which $c$ sites have been added to the support of each operator.

Using this algorithm iteratively, we can then grow a sparse code efficiently with a distance and sparsity guarantee by repeating steps 2-6. The operation can also be converted to a transformation on check matrices using the usual symplectic representation.

\section{Proof of Theorem 2}\label{app:b}

We first determine how many physical qubits we need to add to increase code distance by $1$, in the worst case. As we started with a finite size seed code, any overlap between bare logical operators must be bounded by some constant $c$. To add to the distance of the first set of logical operators, from Lemma~\ref{lemma:baredist} we need to add $2d+c$ sites. As the weight of the bare operator for the 2nd logical qubit could have been increased by $c$ during this process, to increase the distance of the 2nd logical qubit we now need to add at worst $2(d+c)+c$ qubits. Repeating this $k$ times, we have 
\begin{align}
    \Delta n &= \sum_{i=0}^{k-1}2[(d+ic)+c] \nonumber\\
    &\leq 2k(d+c)+ck(k-1)\nonumber\\ 
    &\leq ck^2+2kd+kc
\end{align}

Now imagine iterating our algorithm $D$ times. For each iteration, we would replace $d$ by the bare operator distance of the base code, which is $d_1\leq kc+d_0$. At the $j$-th iteration, the bare operator weight that we are tracking would be upper bounded by $d_j\leq kcj+d_0$. Hence to increase the minimum distance by at least $D$, the number of qubits we will add is
\begin{align}
    N&\leq \sum_{j=1}^{D} (ck^2+2kd_j+kc) \nonumber\\
    &= ck^2D +ckD+kD(1+D) \nonumber\\
    &= kD^2+kD(c+1)+ck^2D.
\end{align}

The final code is a $[[N+n,k,d+D]]$. In the asymptotic limit where $D\gg d, N\gg n$, keeping only leading order terms we have a $[[n'=ckD^2, k'=k, d'=D]]$, which completes the proof with relabelling.    

Another limit to notice is the large $k$ limit where $D$ is of a similar size, then we see that there is a contribution $k^2D$ can contribute about equally as $kD^2$.
\end{document}